\documentclass[conference]{IEEEtran}
\usepackage{cite}
\usepackage{amsmath,amssymb,amsfonts,amsthm,mathrsfs}
\usepackage{multirow}
\usepackage{bbm}
\usepackage{algorithmic}
\usepackage{caption}
\usepackage{graphicx}
\usepackage{tikz} 
\usetikzlibrary{automata, positioning, arrows}
\usepackage{textcomp}
\usepackage{xcolor}
\usepackage{manfnt}
\usepackage{mathtools}
\usepackage{fixmath}
\usepackage{pgf}
\usepackage{xcolor}
\usetikzlibrary{positioning,arrows}
\usepackage{balance} 

\def\BibTeX{{\rm B\kern-.05em{\sc i\kern-.025em b}\kern-.08em
    T\kern-.1667em\lower.7ex\hbox{E}\kern-.125emX}}
    
\interdisplaylinepenalty=2500
\sloppy


\newcommand{\N}{\mathbb{N}}

\newcommand{\R}{\mathbb{R}}
\newcommand{\Q}{\mathbb{Q}}

\newcommand{\sA}{\mathcal{A}}

\newcommand{\sP}{\mathcal{P}}

\newcommand{\sS}{\mathcal{S}}
\newcommand{\sU}{\mathcal{U}}
\newcommand{\sX}{\mathcal{X}}
\newcommand{\sY}{\mathcal{Y}}

\newcommand{\sW}{\mathcal{W}}

\newcommand{\fT}{\mathfrak{T}}

\theoremstyle{plain}
\newtheorem{thm}{Theorem}
\newtheorem{lem}{Lemma}

\newtheorem{cor}{Corollary}

\theoremstyle{defn}
\newtheorem{defn}{Definition}

\theoremstyle{rem}
\newtheorem{rem}{Remark}

\tikzstyle{block} = [draw, rectangle, 
  minimum height=3em, minimum width=4em]

\newcommand{\tcr}[1]{\textcolor{red}{#1}}
\newcommand{\tco}[1]{\textcolor{orange}{#1}}

\renewcommand{\tcr}[1]{#1}
\renewcommand{\tco}[1]{#1}

\begin{document}

\title{Capacity of Finite State Channels with Feedback: Algorithmic and Optimization Theoretic Properties}

\author{
	\IEEEauthorblockN{Andrea Grigorescu\IEEEauthorrefmark{1}, Holger Boche\IEEEauthorrefmark{1}\IEEEauthorrefmark{2}, Rafael F. Schaefer\IEEEauthorrefmark{3}, and H. Vincent Poor\IEEEauthorrefmark{4}\\[1.5ex]}
	\IEEEauthorblockA{\small \IEEEauthorrefmark{1} Chair of Theoretical Information Technology, Technical University of Munich\\
		\IEEEauthorrefmark{2} BMBF Research Hub 6G-life\\
		Excellence Cluster Cyber Security in the Age of Large-Scale Adversaries (CASA), Ruhr University Bochum\\
		email: \texttt{\{andrea.grigorescu, boche\}@tum.de}\\[1ex]
		\IEEEauthorrefmark{3} Chair of Communications Engineering and Security, University of Siegen\\
		Center for Sensor Systems (ZESS), University of Siegen\\
		email: \texttt{rafael.schaefer@uni-siegen.de}\\[1ex]
		\IEEEauthorrefmark{4} Department of Electrical and Computer Engineering, Princeton University\\
		email: \texttt{poor@princeton.edu}
	}
	}

\maketitle

\begin{abstract}
The capacity of finite state channels (FSCs) with feedback has been shown to be a limit of a sequence of multi-letter expressions. Despite many efforts, a closed-form single-letter capacity characterization is unknown to date. In this paper, the feedback capacity is studied from a fundamental algorithmic point of view by addressing the question of whether or not the capacity can be algorithmically computed. To this aim, the concept of Turing machines is used, which provides fundamental performance limits of digital computers. It is shown that the feedback capacity of FSCs is not Banach-Mazur computable and therefore not Borel-Turing computable. As a consequence, it is shown that either achievability or converse is not Banach-Mazur computable, which means that there are computable FSCs for which it is impossible to find computable tight upper and lower bounds. Furthermore, it is shown that the feedback capacity cannot be characterized as the maximization of a finite-letter formula of entropic quantities. 
\end{abstract}

\footnotetext[1]{\tco{This work of H. Boche was supported in part by the German Federal Ministry of Education and Research (BMBF) within the national initiative on \tcr{6G Communication Systems through the research hub \emph{6G-life} under Grant 16KISK001K, within the national initiative for \emph{Post Shannon Communication (NewCom)} under Grant 16KIS1003K, and the project \emph{Hardware Platforms and Computing Models for Neuromorphic Computing (NeuroCM)} under Grant 16ME0442. It has further received funding by the Bavarian Ministry of Economic Affairs, Regional Development and Energy as part of the project \emph{6G Future Lab Bavaria}} as well as in part by the German Research Foundation (DFG) within Germany’s Excellence Strategy EXC-2092 -- 390781972. This work of R. F. Schaefer was supported in part by the BMBF within NewCom under Grant 16KIS1004 and in part by the DFG under Grant SCHA 1944/6-1. This work of H. V. Poor was supported by the U.S. National Science Foundation under Grant \tcr{ CCF-1908308.}}}

\section{Introduction}
\label{sec:introduction}

Finite state channels (FSCs) model channels with memory where the channel output depends not only on the current channel input but also on the underlying channel state. The channel state allows the channel output to implicitly depend on previous channel inputs and outputs. FSCs are of significant interest as they allow to model certain types of channel variations appearing in wireless communications including, e.g., flat fading and intersymbol interference (ISI) \cite{gallager1968information}. 

In information theory, it has been always of interest to compute the capacity of channels or channel reliability functions. In 1967, techniques for constructing simple upper and lower bounds for channel reliability functions were introduced in \cite{shannon1967lower}. These techniques were developed with the goal of computing those bounds on digital computers.
 In 1972 an algorithm to compute the capacity of arbitrary discrete memoryless channels (DMC) was independently presented in \cite{arimoto1972algorithm} and \cite{blahut1972computation}. In \cite{blahut1972computation}, an analogous algorithm was proposed to compute the rate distortion of lossy source compression. In general, the capacity is \tco{usually} given by mutual information expression. \tco{Note} that even for the  binary symmetric channel \tco{(BSC)} with rational crossover probability, i.e., \tco{$\epsilon\in(0,\frac{1}{2})\cap\Q$}, the capacity is a transcendental number. Hence, a precise calculation is not possible, since the calculation has to stop after a finite number of computation steps. Only a suitable approximation of it can be calculated.
 
After the works in \cite{shannon1967lower,arimoto1972algorithm,blahut1972computation,csiszar1974computation}  were published, it became increasingly popular in information and communication theory to simulate performance measures of communication systems on digital computers\tco{; in particular for multi-user communication scenarios.} In multi-user information theory, the progress has been rather limited. Therefore, network simulations on digital and high performance computers became a widely used method for the design of practical systems. For a critical discussion of this trend, we refer to \cite{ephremides1998information}. Network simulation plays a crucial role in the design and standardization of communication networks here also.

Determining the capacity of FSCs is a very difficult task. The trapdoor channel, for example, is simple to describe, however its capacity is still an open problem. For now, only a lower bound \cite{kobayashi2002input} and an upper bound given by the feedback capacity \cite{permuter2008capacity} are known.  For general FSCs, a finite-letter characterization of the capacity in closed form is not known to date; only a general formula based on the inf-information rate has been established in \cite{verdu1994general}. Moreover, in \cite{boche2020shannon} it was shown that the FSC capacity is not a computable function.
 
In \cite{shannon1956zero}, it was shown that feedback does not increase the capacity for DMC. However the zero error capacity for a channel with feedback might be greater in some cases, while there is still no closed formula for the zero error capacity without feedback so far. The feedback capacity of \tcr{Markov channels} was studied in \cite{tatikonda2000control}. The capacity of general FSCs with feedback was studied in \cite{tatikonda2008capacity,permuter2009finite,dabora2013capacity}. Only a multi-letter characterization of the capacity is known to date.

In recent years there has been a growing interest on computing the capacity function for FSCs with feedback. The feedback capacity was first formulated as a dynamic program for Markov channels without ISI in \cite{tatikonda2000control} and for a subclass of Markov channels in \cite{chen2005capacity}. Modeling the feedback capacity as a dynamic program and implementing algorithms to solve the Bellman equation has also been used to compute the feedback capacity of the  trapdoor channel \cite{permuter2008capacity}, the binary Ising channel \cite{elishco2014capacity}, the \tco{input-constrained BSC} \cite{sabag2015feedback} and the \tco{input-constrained} binary erasure channel \cite{sabag2015feedback2}. In \cite{aharoni2019computing}, reinforcement learning (RL) algorithms have been proposed to estimate the feedback capacity of a class of unifilar FSCs. The algorithms mentioned above all depend on the specific channel under consideration. In this paper, we address the question of whether or not an algorithm exists that takes as input an \emph{arbitrary} FSC with feedback and computes its capacity.

We are interested in the existence of ``\emph{simple}'' capacity expressions and whether or not such capacity expressions for FSCs with feedback with fixed and finite alphabets are algorithmically computable. In information theory, some performance functions are implicitly assumed to be computable. In particular, capacity expressions with entropic quantities are usually assumed to be algorithmically computable. 

To address algorithmic computability, we use the concept of a \emph{Turing machine} \cite{turing1936computable,turing1938computable,weihrauch2000computable}, which is a mathematical model of an abstract machine that manipulates symbols on a strip of tape according to certain given rules. It can simulate any given algorithm and therefore provides a simple and very powerful model of computation. Turing machines have no limitations on computational complexity, computing capacity or storage, and execute programs completely error-free. Accordingly, they provide fundamental performance limits for today's digital computers. Turing machines account for all those problems and tasks that are algorithmically computable on a classical (i.e., non-quantum) machine. They are further equivalent to the von Neumann-architecture without hardware limitations and the theory of recursive functions \cite{godel1930vollstandigkeit,godel1934undecidable,kleene1952introduction,minsky1961recursive,avigad2014computability}. 

The computability of the capacity of FSCs has been studied in \cite{elkouss2018memory}, where it was shown that the capacity of FSCs is not \tcr{Borel-Turing} computable if the input and state alphabets $\sX$ and $\sS$ satisfy $|\sX|\geq10$ and $|\sS|\geq62$. In \cite{boche2020shannon}, it was shown that the capacity of FSCs is in general not \tcr{Borel-Turing} computable even for the smallest non-trivial case, i.e., $|\sX|\geq 2$, $|\sY|\geq 2$, and $|\sS|\geq 2$. For $|\sS|=1$, the channel is a DMC and the capacity is given by Shannon's single-letter formula. The capacity of a DMC is \tcr{Borel-Turing} computable.

This paper addresses the general question of whether or not a finite-letter characterization of the capacity of FSCs with feedback exists at all and whether or not the feedback capacity of FSCs is algorithmically computable. For FSCs with $|\sX | \geq 2, |\sY| \geq 2$, and $|\sS| \geq 2$, we show that the feedback capacity is not Banach-Mazur computable and therefore also not Borel-Turing computable. We show that if the capacity of FSCs with feedback had been computable, then it could yield a solution for the halting problem. The halting problem is a decision problem in computability theory, which has been proven to be undecidable \cite{soare1987recursively}.
We show that it is impossible to find computable tight upper and lower bounds on the feedback capacity of FSCs. Furthermore, we show that it is not possible to express the feedback capacity of FSCs by a finite-letter entropic expression. 

The remainder of the paper is organized as follows. The information theoretic preliminaries can be found in Section \ref{sec:preliminaries}. In Section \ref{sec:FSCs}, we introduce a class of FSCs and present the capacity results of FSCs with feedback. In Section \ref{sec:problem_formulation}, we formulate the fundamental questions that motivate our work. In Section \ref{sec:computability_numbers}, we introduce the computability framework. In Section \ref{sec:computability_analysis}, we show that the feedback capacity of FSCs is not Banach-Mazur computable. In Section \ref{sec:computability_a&c}, we show that it is not possible to find tight computable continuous upper and lower bounds, and in Section \ref{sec:finite_letter}, we show that the feedback capacity of FSCs cannot be characterized as a finite-letter maximization problem. Finally, our conclusion \tco{is given} in Section \ref{sec:conclusion}.

\subsection*{Notation}
\tco{$\N$, $\Q$, $\R$, and $\R_c$ are the sets of non-negative integers, rational numbers, real numbers, and computable real numbers; $\sP(\sX)$ and $\sP(\sY|\sX)$ denote the sets of (conditional) probability distributions on $\sY$ (given $\sX$); $H_2(\cdot)$ is the binary entropy function; $\sX^\sY$ denotes the set of all functions from $\sY$ to $\sX$.}

\section{Preliminaries}
\label{sec:preliminaries}

In this section we introduce the concept of causal conditioning and directed information, which \tco{plays} an important role in characterizing the capacity of channels with feedback. These concepts were first used and applied in \cite{massey1990causality,kramer1998directed,kramer2003capacity}.

\begin{defn}\label{eq:ccp}
The probability distribution of the sequences $x^n\in\sX^n$ \emph{causally conditioned} on the sequence $y^n\in\sY^n$ is given by
\begin{equation}
p(y^N\|x^N)=\prod_{n=1}^Np(y_n|x^n,y^{n-1}).
\end{equation}
\end{defn}

A special case of Definition \ref{eq:ccp} used in the context of the FSC with feedback is
\begin{equation}\label{eq:ccpfb}
p(y^N\|x^{N-1})=\prod_{n=1}^Np(y_n|x^{n-1},y^{n-1}).
\end{equation}

\begin{defn}
The \emph{directed information} from a sequence $X^N$ to a sequence $Y^N$ is defined by
\begin{align*}
I(X^N\rightarrow Y^N)&=\sum_{n=1}^NI(X^n;Y_n|Y^{n-1})\\
&=\sum_{n=1}^N H(X^n|Y^{n-1})-H(Y_n|X^nY^{n-1}).
\end{align*}
\end{defn}

An important property of the directed information, which we will use in our work, is that it \tco{can be} upper bounded. The  upper bound of the directed information from $X^n$ to $Y^n$ is presented in following lemma.

\begin{lem}\cite[Theorem~ 2]{massey1990causality}\label{lem:DirectedInfo}
If $X^N$ and $Y^N$ are the input and output sequences respectively of a DMC, then 
\begin{equation}
I(X^N\rightarrow Y^N) \leq \sum_{n=1}^NI(X_n;Y_n)
\end{equation}
with equality if  and only if $Y_1,Y_2,\dots,Y_n$ are statistically independent.
\end{lem}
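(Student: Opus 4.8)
The plan is to reduce the directed information to a sum of single-letter terms and then exploit memorylessness together with the fact that conditioning cannot increase entropy. First I would expand, term by term, using the conditional mutual information identity,
\[
I(X^N\rightarrow Y^N)=\sum_{n=1}^N I(X^n;Y_n\mid Y^{n-1})=\sum_{n=1}^N\bigl(H(Y_n\mid Y^{n-1})-H(Y_n\mid X^n,Y^{n-1})\bigr).
\]
Next I would invoke the defining property of a DMC, namely that the channel law at time $n$ satisfies $p(y_n\mid x^n,y^{n-1})=p(y_n\mid x_n)$ even when the input is generated with feedback, since the output depends only on the current input. Hence $H(Y_n\mid X^n,Y^{n-1})=H(Y_n\mid X_n)$ for every $n$.

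Substituting this and using that conditioning reduces entropy, $H(Y_n\mid Y^{n-1})\le H(Y_n)$, yields
\[
I(X^N\rightarrow Y^N)\le\sum_{n=1}^N\bigl(H(Y_n)-H(Y_n\mid X_n)\bigr)=\sum_{n=1}^N I(X_n;Y_n),
\]
which is the claimed bound. For the equality case, equality holds above exactly when $H(Y_n\mid Y^{n-1})=H(Y_n)$ for every $n$, i.e.\ when each $Y_n$ is independent of $Y^{n-1}$; by the chain rule $p(y^N)=\prod_{n=1}^N p(y_n\mid y^{n-1})$, this family of conditions is equivalent to $p(y^N)=\prod_{n=1}^N p(y_n)$, that is, to the mutual independence of $Y_1,\dots,Y_N$.

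The argument is essentially routine. The only point requiring care is justifying that the identity $p(y_n\mid x^n,y^{n-1})=p(y_n\mid x_n)$ survives the presence of feedback, so that no hidden dependence on past outputs enters $H(Y_n\mid X^n,Y^{n-1})$; this follows from the Markov structure of the feedback model, in which $Y_n$ is conditionally independent of $(X^{n-1},Y^{n-1})$ given $X_n$. Since the lemma is attributed to Massey, one could alternatively simply cite the original derivation, but the above short argument is self-contained.
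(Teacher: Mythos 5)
Your argument is correct and is essentially the derivation given in the cited source (Massey's Theorem 2), which the paper itself does not reprove: expand the directed information via $I(X^n;Y_n\mid Y^{n-1})=H(Y_n\mid Y^{n-1})-H(Y_n\mid X^n,Y^{n-1})$, use the memoryless/non-anticipatory property $p(y_n\mid x^n,y^{n-1})=p(y_n\mid x_n)$ to reduce the second term to $H(Y_n\mid X_n)$, and bound $H(Y_n\mid Y^{n-1})\le H(Y_n)$, with the equality analysis via the chain rule exactly as you state. No gaps.
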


\section{Finite State Channels with Feedback}
\label{sec:FSCs}

A suitable model to represent discrete channels with memory are discrete \tco{FSCs}. Here, we introduce the concept of \tco{FSCs} and present the capacity results with feedback known to date.

\subsection{Basic Definitions}
\label{sec:fsc_def}

Let $\sX$, $\sY$, and $\sS$ be finite input, output, and state sets. FSCs are described by following probability law
\begin{equation}
	\label{eq:probabilitylaw}
	P(y_n,s_n|x_n,s_{n-1}) \in \sP(\sY\times\sS|\sX\times\sS)
\end{equation}
where $y_n\in\sY$ and $s_n\in\sS$ are the output and state of the channel at time instant $n$ whose probability depend on the input $x_n\in\sX$ at time instant $n$ and on the previous state $s_{n-1}\in\sS$ at time instant $n-1$. We consider the transmission in presence of feedback. The feedback at time instant $n\in\N$ is the last output symbol of the channel, i.e., $y_{n-1}$; see Fig. \ref{fig:FSCfeedback}. 

For a fixed blocklength $n$,  the probability of the output sequence $y^n$ and the final state $s_n$ at time instant $n$ given an input sequence $x^n$ and an initial state $s_0$ is given by
\begin{align}
P^n(y^n,s_n|x^n,s_0)=&\sum_{s_{n-1}\in\sS}P(y_n,s_n|x_n,s_{n-1})\nonumber\\
&\quad\times P^{n-1}(y^{n-1},s_{n-1}|x^{n-1},s_0).\label{nfoldprobablity}
\end{align}

\tikzstyle{block} = [draw, rectangle, 
  minimum height=3em, minimum width=4em]

 \begin{figure}
     \begin{tikzpicture}[node distance=9em, every node/.style={scale=
    0.9}]
     	
       \node [block] (a) {$x_n(m,y^{n-1})$};
       \node [block, right of=a] (b) {$P(y_n,s_n|x_n,s_{n-1})$};
       \node [block, right of=b] (c) {$\hat{m}(y^N)$};
       \node [block, below of=b] (d) {Unit Delay};

	 \draw[->,draw=black] (-1.5,0) -- node[anchor=south]{$m$} (a);    
       \draw[->,draw=black] (a) -- node[anchor=south]{$x_n$} (b);
       \draw[->,draw=black] (b) -- node[anchor=south]{$y_n$}  (c);
      \draw[->,draw=black] (c) -- node[anchor=south]{$\hat{m}$} (7,0); 
       \draw[->, draw=black]
   		(4.7,0) |- node[anchor=west]{$y_n$} (d);
   		
   	\draw[->, draw=black]
   		(d) -| node[anchor=east]{$y_{n-1}$} (a);      
     \end{tikzpicture}
      \caption{Finite state channel with deterministic feedback $y_{n-1}$.}
	\label{fig:FSCfeedback}
  \end{figure}
  
In this work we focus on unifilar FSCs.

\begin{defn}
An FSC is \tco{called \emph{unifilar}} if there exists a time-invariant function $f(\cdot)$ such that the state evolves according to the equation
\begin{equation*}
	s_n= f(s_{n-1}, x_n,y_n).
\end{equation*}
\end{defn}
\begin{rem}
The probability law of a unifilar FSC is described by 
\begin{align}
P(y_n,s_n|x_n,s_{n-1})&=W(y_n|x_n,s_{n-1})p(s_n|y_n,x_n,s_{n-1})\nonumber\\
&=W(y_n|x_n,s_{n-1})\nonumber\\
&\quad\quad\times\mathbbm{1}(s_n= f(s_{n-1}, x_n,y_n)).\label{unifilar}
\end{align}
From \eqref{unifilar}, we see that we only need the channel \tco{$W\in\sP(\sY|\sX\times\sS)$} and the transition state function $f$ to fully describe a unifilar FSC.
\end{rem}

The capacity of general FSCs with deterministic feedback was derived in \cite{permuter2009finite}. In this work, we study the algorithmic behavior of the capacity depending on the parameters $\{W,f,s_0\}$. To this aim, we express  the capacity of unifilar FSCs with feedback as a function of $\{W,f,s_0\}$, i.e., $C_{FB}(\{W,f,s_0\})$.

To describe the feedback capacity function, we introduce the upper and lower capacity as follows:
\begin{align*}
	\underbar{C}_{FB}&(\{W,f,s_0\}) \\
	&= \lim_{N\rightarrow\infty}\frac{1}{N}\max_{p(x^N\|y^{N-1})}\min_{s_0}I(X^N\rightarrow Y^N|s_0),\\
	\overline{C}_{FB}&(\{W,f,s_0\}) \\
	&= \lim_{N\rightarrow\infty}\frac{1}{N}\max_{p(x^N\|y^{N-1})}\max_{s_0}I(X^N\rightarrow Y^N|s_0).
\end{align*}

\begin{thm}\cite{permuter2009finite}\label{thm:fbcapacity}
For any unifilar FSC with deterministic feedback, the capacity is shown to be bounded by
\begin{align*}
\underbar{C}_{FB}(\{W,f,s_0\})\leq C_{FB}(\{W,f,s_0\}) \leq\overline{C}_{FB}(\{W,f,s_0\}).
\end{align*}
\end{thm}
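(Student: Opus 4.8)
The plan is to split the statement into the converse bound $C_{FB}(\{W,f,s_0\})\le\overline{C}_{FB}(\{W,f,s_0\})$, obtained from a Fano-type argument, and the achievability bound $\underbar{C}_{FB}(\{W,f,s_0\})\le C_{FB}(\{W,f,s_0\})$, obtained from a directed-information random-coding argument, and to handle each in the standard way. As a preliminary I would record that both limits exist: concatenating causally conditioned input distributions block-wise while tracking the terminal state shows that $N\mapsto\max_{p(x^N\|y^{N-1})}\min_{s_0}I(X^N\to Y^N\mid s_0)$ is super-additive up to an $O(\log|\sS|)$ correction and that its $\max_{s_0}$ counterpart is sub-additive up to the same correction, so Fekete's lemma applies; this is routine and I would dispatch it quickly.

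\emph{Converse.} Take any sequence of feedback codes of blocklength $N$ and rate $R$ with vanishing error probability, with initial state $s_0$ known to encoder and decoder. Fano's inequality gives $H(M\mid Y^N,s_0)\le N\epsilon_N$ with $\epsilon_N\to0$; since $M$ is uniform and independent of $s_0$, $NR=H(M\mid s_0)=I(M;Y^N\mid s_0)+H(M\mid Y^N,s_0)\le I(M;Y^N\mid s_0)+N\epsilon_N$. The key step is $I(M;Y^N\mid s_0)\le I(X^N\to Y^N\mid s_0)$: expanding $I(M;Y^N\mid s_0)=\sum_{i=1}^{N}I(M;Y_i\mid Y^{i-1},s_0)$ and inserting $X^i$ into the first slot, the extra term $I(M;Y_i\mid X^i,Y^{i-1},s_0)$ vanishes because, by the unifilar property, $S_{i-1}=f(\cdots)$ is a deterministic function of $(s_0,X^{i-1},Y^{i-1})$, so conditioned on $(X^i,Y^{i-1},s_0)$ the output $Y_i$ is distributed as $W(\cdot\mid X_i,S_{i-1})$ and carries no further dependence on $M$; hence $I(M;Y_i\mid Y^{i-1},s_0)\le I(X^i;Y_i\mid Y^{i-1},s_0)$, and summing yields the directed information. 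Because the encoder uses feedback, the induced input law is a causally conditioned distribution $p(x^N\|y^{N-1})$, so $I(X^N\to Y^N\mid s_0)\le\max_{p(x^N\|y^{N-1})}\max_{s_0'}I(X^N\to Y^N\mid s_0')$; dividing by $N$ and letting $N\to\infty$ gives $R\le\overline{C}_{FB}(\{W,f,s_0\})$.

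\emph{Achievability.} Fix $\delta>0$ and choose $N$ together with a causally conditioned distribution $q(x^N\|y^{N-1})=\prod_{i=1}^N q(x_i\mid x^{i-1},y^{i-1})$ with $\frac1N\min_{s_0}I_q(X^N\to Y^N\mid s_0)\ge\underbar{C}_{FB}(\{W,f,s_0\})-\delta$. I would then run a directed-information random-coding scheme on blocks of length $N$: the codeword assigned to message $m$ is feedback-adaptive, its $i$-th letter being drawn from $q(x_i\mid x^{i-1}(m),y^{i-1})$, and decoding thresholds the directed-information density (equivalently, uses joint typicality with respect to the joint law induced by $q$ and the channel). The point is that this codebook is built from $q$ alone, so whatever the realized initial state, the induced directed information is at least $N(\underbar{C}_{FB}-\delta)$; the standard change-of-measure and union-bound analysis then makes the error probability exponentially small uniformly in $s_0$. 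Concatenating such blocks over the transmission — the per-block cost of not knowing the intermediate state being $O(\log|\sS|)$ and hence negligible after normalization — shows that every rate below $\underbar{C}_{FB}(\{W,f,s_0\})$ is achievable; letting $\delta\to0$ finishes the argument.

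\emph{Main obstacle.} The real difficulty is the achievability part under an uncontrolled initial state: a single codebook, tuned to the $\max$--$\min$-optimal input distribution, must work simultaneously for every $s_0$, which is precisely why $\min_{s_0}$ sits inside $\underbar{C}_{FB}$; making the error exponent uniform in $s_0$ and stitching the length-$N$ blocks together without leaking rate through the unknown boundary states is where the care is needed. Establishing existence of the two limits is the secondary technical point, but it follows from the block-concatenation super-/sub-additivity and Fekete's lemma.
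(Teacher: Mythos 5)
The paper offers no proof of this statement: it is imported verbatim from \cite{permuter2009finite}, so there is nothing internal to compare against. Your proposal is a correct sketch that follows essentially the same route as that source --- Fano's inequality combined with the bound $I(M;Y^N\mid s_0)\le I(X^N\rightarrow Y^N\mid s_0)$ for the converse, random coding over causally conditioned input distributions with a worst-case-over-$s_0$ error analysis for achievability, and super-/sub-additivity with an $O(\log|\sS|)$ state-boundary correction plus Fekete's lemma for the existence of the two limits --- and it correctly identifies the uniformity in $s_0$ of the achievability argument as the genuinely delicate point.
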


Indecomposable FSCs are channels for which the initial state effect on the capacity  vanishes with time. To define indecomposable FSCs we set 
\begin{equation*}
q^n(s_n|x^n,s_0)=\sum_{y^n\in\mathcal{Y}^n}P^n(y^n,s_n|x^n,s_0).
\end{equation*}•
\begin{defn}
An FSC is called \emph{indecomposable} if for every $\epsilon>0$ there exists an $n_0\in\mathbb{N}$ such that for all $n\geq n_0$ we have $|q^n(s_n|x^n,s_0)-q^n(s_n|x^n,s'_0)|\leq \epsilon$ for all $s_n\in\sS$, $x^n\in\sX^n$,  $s_0\in\sS$, and  $s'_0\in\sS$.
\end{defn}		

 We next introduce the definition of strongly connected FSCs, which we consider throughout this paper.

\begin{defn}
A finite state channel is said to be \emph{strongly connected} if for any state $s$ there exists an integer $T$ and an input distribution of the form $\{p(x_n|s_{n-1})\}_{n=1}^T$ such that the probability that the channel reaches state $s$ for, any starting state $s'$, in less than $T$ time-steps, is positive, i.e.
\begin{equation*}
\tco{\sum_{n=1}^T \Pr(S_n=s|S_0=s')>0,\quad \forall s\in\mathcal{S}, \forall s' \in\mathcal{S}.}
\end{equation*}
\end{defn}

\begin{rem}
Strongly connected FSCs are also indecomposable FSCs. However, not every indecomposable FSC is also a strongly connected FSC.
\end{rem}

If a unifilar FSC is also strongly connected, and therefore indecomposable, then the lower and upper capacity coincide and are equal to the capacity, i.e., $\overline{C}_{FB}(\{W,f,s_0\})=\underbar{C}_{FB}(\{W,f,s_0\})= C_{FB}(\{W,f,s_0\})$. The capacity of indecomposable unifilar FSCs with feedback is presented in the following theorem.

\begin{thm}\label{thm:ifsccapacity}\cite{permuter2009finite}
The capacity of a indecomposable unifial FSC  with deterministic feedback is
\begin{equation}\label{eq:ifscapacity}
C_{FB}(\{W,f,s_0\})=\lim_{N\rightarrow\infty}\max_{p(x^N\|y^{N-1})}\frac{1}{N} I(X^N\rightarrow Y^N).
\end{equation}
\end{thm}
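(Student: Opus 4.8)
The plan is to derive \eqref{eq:ifscapacity} from Theorem~\ref{thm:fbcapacity}. For $a\in\sS$ set
\[
Q_N(a) \;=\; \max_{p(x^N\|y^{N-1})} I\big(X^N\rightarrow Y^N \mid s_0=a\big),
\]
so that $\underbar{C}_{FB}(\{W,f,s_0\}) = \lim_{N} \tfrac1N \min_{a\in\sS} Q_N(a)$, $\overline{C}_{FB}(\{W,f,s_0\}) = \lim_{N} \tfrac1N \max_{a\in\sS} Q_N(a)$, and --- taking $S_0\equiv s_0$ in the unconditioned directed information --- the right-hand side of \eqref{eq:ifscapacity} equals $\lim_{N} \tfrac1N Q_N(s_0)$. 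Since $\min_{a\in\sS} Q_N(a)\le Q_N(s_0)\le \max_{a\in\sS} Q_N(a)$ for every $N$, Theorem~\ref{thm:fbcapacity} reduces the theorem to the single estimate
\[
\tfrac1N\Big(\max_{a\in\sS} Q_N(a)-\min_{a\in\sS} Q_N(a)\Big)\longrightarrow 0 \qquad (N\to\infty),
\]
which expresses the fact that for an indecomposable channel the initial state stops mattering at rate $o(N)$.

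To prove it, fix a causally-conditioned input distribution $p(x^N\|y^{N-1})$ and compare the joint laws of $(X^N,Y^N)$ obtained with $s_0=a$ and with $s_0=b$. By the definition of indecomposability, for every $\epsilon>0$ there is an $n_0=n_0(\epsilon)$ --- \emph{uniform over all input strings} --- with $|q^{n}(s\mid x^{n},a)-q^{n}(s\mid x^{n},b)|\le\epsilon$ for all $n\ge n_0$, $s\in\sS$, $x^n\in\sX^n$. Since the channel is unifilar, $S_{n_0}$ is a deterministic function of $(s_0,x^{n_0},y^{n_0})$, so this says that the conditional law of $S_{n_0}$ given the realized $(x^{n_0},y^{n_0})$ differs by at most $|\sS|\epsilon$ in total variation between the two initial states. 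Splitting the directed-information chain rule at time $n_0$ and conditioning the tail on $S_{n_0}$, the first $n_0$ terms contribute at most $O(n_0\log(|\sX||\sY|))$ and the remaining terms differ by at most (total variation) $\times\,O(N\log(|\sX||\sY|))$. Hence
\[
\big|\,I(X^N\rightarrow Y^N \mid s_0=a)-I(X^N\rightarrow Y^N \mid s_0=b)\,\big| \;\le\; O(n_0)+O\big(N|\sS|\epsilon\big),
\]
uniformly over $p(x^N\|y^{N-1})$ and over $a,b\in\sS$. Maximizing over input distributions, dividing by $N$, letting $N\to\infty$ and then $\epsilon\to 0$ yields the required estimate; combined with Theorem~\ref{thm:fbcapacity} and the sandwich above, this gives $C_{FB}(\{W,f,s_0\}) = \lim_{N}\tfrac1N Q_N(s_0)$, which is \eqref{eq:ifscapacity}. (If one does not wish to take the existence of the limits defining $\underbar{C}_{FB}$ and $\overline{C}_{FB}$ for granted, they follow from Fekete's subadditive lemma: $N\mapsto\max_{a}Q_N(a)$ is superadditive up to an $o(N)$ defect, obtained by concatenating a near-optimal length-$N_1$ strategy with a length-$N_2$ strategy that restarts its clock at time $N_1$ and again absorbing the uncontrolled intermediate state via the estimate just proved.)

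I expect this state-merging estimate to be the main obstacle. Indecomposability is a statement only about the state marginals $q^{n}$ under \emph{fixed} input strings, and it has to be upgraded to an asymptotic insensitivity of the \emph{optimized} directed-information functional, which lives over the much larger space of causally-conditioned input distributions; the delicate points are that the merging time $n_0(\epsilon)$ be used uniformly over input strings (which the definition does provide), that the feedback dependence of $x_n$ on $y^{n-1}$ not break the comparison (handled by the unifilar map $f$, which lets the realized output sequence carry the state information), and that the error accumulated over the $N-n_0$ post-merging steps stay linear in $N\epsilon$ rather than compounding. A softer but less self-contained alternative is to combine the general (inf-information-rate) capacity formula of \cite{verdu1994general}, in its directed-information form, with the information stability of indecomposable channels.
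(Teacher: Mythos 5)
First, a point of reference: the paper does not prove this statement at all --- Theorem~\ref{thm:ifsccapacity} is quoted from \cite{permuter2009finite} --- so there is no in-paper proof to compare against, and your argument has to stand on its own. Your overall strategy is the right one and matches the literature: sandwich $\lim_N \tfrac1N Q_N(s_0)$ between $\underbar{C}_{FB}$ and $\overline{C}_{FB}$ via Theorem~\ref{thm:fbcapacity}, and show that indecomposability forces $\tfrac1N\big(\max_a Q_N(a)-\min_a Q_N(a)\big)\to 0$; this is the feedback analogue of Gallager's Theorem~4.6.4. The difficulty is exactly the step you flag as the crux, and as written it fails.

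The gap: indecomposability controls $q^n(s\mid x^n,s_0)=\sum_{y^n}P^n(y^n,s\mid x^n,s_0)$, i.e.\ the law of $S_n$ \emph{averaged over the channel outputs}. You convert this into the claim that the conditional law of $S_{n_0}$ given the \emph{realized} $(x^{n_0},y^{n_0})$ differs by at most $|\sS|\epsilon$ in total variation between the two initial states. For a unifilar channel that conditional law is a point mass (at the $n_0$-fold iterate of $f$ applied to $(s_0,x^{n_0},y^{n_0})$), so the two conditional laws are either identical or at maximal total-variation distance; closeness of the output-averaged marginals --- which are, moreover, averages against two \emph{different} output laws $P(\cdot\mid x^{n_0},a)$ and $P(\cdot\mid x^{n_0},b)$ --- implies nothing about them. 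A second, related problem: with feedback, the inputs after time $n_0$ are functions of $y^{n-1}$, which includes $y^{n_0}$, whose distribution itself depends on $s_0$; so you cannot simply ``condition the tail on $S_{n_0}$'' and compare the two processes, since they are not driven by a common strategy from a common starting point after time $n_0$. The standard repair is the constructive argument you only hint at in your Fekete parenthetical: for an adverse initial state, sacrifice the first $n_0$ symbols to steer the state (losing $O(n_0\log|\sY|)$ of directed information), then run the optimal length-$(N-n_0)$ strategy and pay an additional $\log|\sS|$ for the residual uncertainty about $S_{n_0}$, yielding an inequality of the form $\min_a Q_N(a)\ \ge\ \max_a Q_{N-n_0}(a)-\log|\sS|-O(n_0\log|\sY|)$. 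Normalizing by $N$ and letting $N\to\infty$ gives the coincidence of the upper and lower capacities without any total-variation coupling of the post-merging trajectories; that is essentially the proof in \cite{permuter2009finite}.
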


From \eqref{eq:ifscapacity} we see that the capacity of indecomposable unifilar FSCs with feedback does not have a \tco{finite-letter} representation. It is the sequence of optimization problems. For arbitrary but fixed $N\in\N$, we have that $\max_{p(x^N\|y^{N-1})}\frac{1}{N} I(X^N\rightarrow Y^N)$ is a computable number. This way, $\{\max_{p(x^N\|y^{N-1})}\frac{1}{N} I(X^N\rightarrow Y^N)\}_{N\in\N}$ is a computable sequence of computable numbers. From Theorem \ref{thm:ifsccapacity} we have that  $\{\max_{p(x^N\|y^{N-1})}\frac{1}{N} I(X^N\rightarrow Y^N)\}_{N\in\N}$ is a convergent sequence, however we do not know if it converges effectively and if $C_{FB}(\{W,f,s_0\})$ is a computable number at all.
\section{Problem Formulation}
\label{sec:problem_formulation}
The capacity function of communication scenarios has an entropic formulation. For example, for finite input and output alphabets $|\sX|<\infty$ and $|\sY|<\infty$, the capacity of a DMC $W\in\sP(\sY|\sX)$ is $\max_{p\in\sP(\sX)}I(p,W)$, i.e., the maximization of an entropic function over the input probabilities, see \cite{shannon1949communication}. Let us consider BSCs with rational crossover probability $\epsilon\in[0,1)\cap\Q$. Such a channel is clearly computable, since every rational number can be exactly expressed by a digital computer. Interestingly, the capacity of a BSC with rational crossover probability, except for \tco{$\epsilon=\{0,\frac{1}{2}\}$}, is a transcendental number.

\tcr{Subsets} of the transcendental numbers are non-computable. Intuitively, a number $x\in\R$ is computable if there exists an algorithm for $x$ that, given a desired precision, returns an approximation of the number to that precision in finitely many steps. 
 A function $g\colon\R\rightarrow\R$ is Turing computable, if there exists an algorithm that returns a computable number for every possible computable input parameter. The Blahut-Arimoto algorithm is an algorithm that takes any computable DMC as input an computes the capacity, see \cite{arimoto1972algorithm,blahut1972computation}. Hence, the capacity of DMCs is a computable function.

Coming back to the FSC with feedback, as stated in Theorem \ref{thm:fbcapacity}, the capacity of general FSCs with feedback is bounded from above and from below. Both bounds are given by a multi-letter expression. 
If we restrict the class of FSCs to indecomposable and unifilar, a mathematical expression of the capacity known. 

In Theorem \ref{thm:ifsccapacity}, the capacity of indecomposable unifilar FSCs with feedback is given by a multi-letter expression. This expression is the limit of a sequence of optimization problems. At first glance, this expression looks complicated to compute. It would be desirable to have a universal algorithm for indecomposable and unifilar FSCs that takes the channel $W\in\sP(\sY|\sX\times\sS)$, the state transition function $f\in\sS^{\sS\times\sX\times\sY}$, and the initial state $s_0\in\sS$ and computes the capacity in presence of feedback.
This is visualized in Fig. \ref{fig:TuringMachine}. We ask the following question:

{\bf Question 1:} \emph{For fixed and finite alphabets $\sX$, $\sY$, and $\sS$, is there an algorithm that takes a channel $W$, a state transition function $f$ and an initial state $s_0$ as inputs and computes the feedback capacity function $C_{FB}(\{W,f,s_0\})$?}

  \begin{figure}
     \begin{tikzpicture}       [block/.style={draw,minimum width=#1,minimum height=4em},
        block/.default=10em,high/.style={minimum height=3em},auto]

     	\node (b) at (-1,0.5)  {$W$};
       \node (c) at (-1,0) {$f$};
       \node (d) at (-1,-0.5) {$s_0$};
       \node [block, right=of c] (a) {$\fT_{C_{FB}}$};
       \node (e) at (6,0) {$C_{FB}(\{W,f,s_0\})$};


       \draw[->,draw=black] (b) -- (0.23,0.5);
       \draw[->,draw=black] (c) -- (a);
       \draw[->,draw=black] (d) -- (0.23,-0.5);
       \draw[->,draw=black] (a) -- (e);
            
     \end{tikzpicture}
     \caption{Turing machine $\fT_{C_{FB}}$ for capacity computation for fixed and finite alphabets $\sX$, $\sY$, and $\sS$. It takes the channel $W\in\sP(\sY|\sX\times\sS)$, the state transition function $f\in\sS^{\sS\times\sX\times\sY}$, and the initial state $s_0\in\sS$ and computes the capacity in presence of feedback $C_{FB}(\{W,f,s_0\}$. }
	\label{fig:TuringMachine}
  \end{figure}

\tco{Coding schemes and general achievability results provide us with lower bounds.} Upper bounds are established via converse arguments. In \cite{shannon1967lower}, techniques to derive lower and upper bounds for the DMC were presented. This techniques are an approach to find computable tight lower and upper bounds on digital computers. 

In  practical communication scenarios, such as described in \cite{ephremides1998information},
 and standard approaches, the design, optimization and standardization of communication networks simulate the behavior of coding procedures and complex protocols and thus provide achievable performance lower bounds for optimal performance. In practice, the behavior of coding procedures is always compared with upper bounds or, if possible, optimal performances.
Such lower and upper bounds should be computable to enable a numerical evaluation on digital computers. This motivates our next question:

{\bf Question 2:}\emph{ For fixed and finite alphabets $\sX$, $\sY$, and $\sS$, is it possible to find tight computable lower and upper bounds depending on the parameters $\{W,f\}$ for the feedback capacity of unifilar FSCs?}

In \cite{shannon1949communication}, the well know capacity characterization of the DMC is formulated as a convex maximization problem over the input distribution set $\sP(\sX)$. This formulation is frequently used in communication models in information theory, e.g., the wiretap channel \cite{wyner1975wire}. There is a great interest in formulating the capacity of FSCs with feedback, among other communication scenarios, as such a convex optimization problem. This leads us to the following question: 

{\bf Question 3:} \emph{For fixed and finite alphabets $\sX$, $\sY$, and $\sS$, is it possible to characterize the capacity function of unifilar FSCs with feedback $C_{FB}(\{W,f,s_0\})$ as the optimization problem of a finite letter function?}

\section{Computability Framework}
\label{sec:computability_numbers}

To address Questions 1 and 2 and \tco{to} study the capacity of the FSC with feedback from an algorithmic perspective, we introduce the basics of computablity theory. The concepts of computability and computable real numbers were first introduced by Turing in \cite{turing1936computable} and \cite{turing1938computable}. Computable numbers are real numbers that are computable by Turing machines. 

A sequence of rational numbers $\{r_n\}_{n\in\N}$ is called a \emph{computable sequence} if there exist recursive functions $a,b,s:\N\rightarrow\N$ with $b(n)\neq0$ for all $n\in\N$ and
\begin{equation}
	\label{eq:computability_comp1}
	r_n= (-1)^{s(n)}\frac{a(n)}{b(n)}, \qquad n\in\N.
\end{equation}

A real number $x$ is said to be computable if there exists a computable sequence of rational numbers $\{r_n\}_{n\in\N}$ such that
\begin{equation}
	\label{eq:computability_comp2}
	|x-r_n|<2^{-n}
\end{equation}
for all $n\in\N$. This means that the computable real number $x$ is completely characterized by the recursive functions $a,b,s:\N\rightarrow\N$. It has the representation $(a,b,s)$ which we also write as $x\sim (a,b,s)$. It is clear that this representation must not be unique and that there might be other recursive functions $a',b',s':\N\rightarrow\N$ which characterize $x$, i.e., $x\sim (a',b',s')$.

We denote the set of computable real numbers by $\R_c$. Based on this, we define the set of computable probability distributions $\sP_c(\sX)$ as the set of all probability distributions $p\in\sP(\sX)$ such that $p(x)\in\R_c$ for every $x\in\sX$. The set of all computable conditional probability distributions $\sP_c(\sY|\sX)$ is defined accordingly, i.e., for $W:\sX\rightarrow\sP(\sY)$ we have $W(\cdot|x)\in\sP_c(\sY)$ for every $x\in\sX$. This is important since a Turing machine can only operate on computable real numbers.

We consider the capacity as a function of the tuple $\{W,f,s_0\}$. For this, we introduce the notion of computable functions.  

\begin{defn}
	\label{def:borel}
	A function $f:\R_c\rightarrow\R_c$ is called \emph{Borel-Turing computable} if there is an algorithm (or Turing machine) that transforms each given representation $(a,b,s)$ of a computable real number $x$ into a corresponding representation for the computable real number $f(x)$.
\end{defn}

To answer Question 2, we need the concept of \emph{computable continuous functions} \cite[Def.~A]{pour2017computability}. For this, let $\mathbb{I}_c$ denote a computable interval, i.e., $\mathbb{I}_c=[a,b]$ with $a,b\in\R_c$.

\begin{defn}[\cite{pour2017computability}]
	\label{def:compcont}
	Let $\mathbb{I}_c\subset\R_c$ be a computable interval. A function $f:\mathbb{I}_c\rightarrow\R_c$ is called  \emph{computable continuous} if:
	\begin{enumerate}
		\item $f$ is \emph{sequentially computable}, i.e., $f$ maps every computable sequence $\{x_n\}_{n\in\N}$ of points $x_n\in\mathbb{I}_c$ into a computable sequence $\{f(x_n)\}_{n\in\N}$ of real numbers,
		\item $f$ is \emph{effectively uniformly continuous}, i.e., there is a recursive function $d:\N\rightarrow\N$ such that for all $x,y\in\mathbb{I}_c$ and all $N\in\N$ with
		\begin{equation*}
			\|x-y\|\leq\frac{1}{d(N)}
		\end{equation*}	
		it holds that
		\begin{equation*}
			|f(x)-f(y)|\leq\frac{1}{2^N}.
		\end{equation*}
	\end{enumerate}
\end{defn}
\begin{rem}
The notion of computable continuous functions is stronger than the one of Borel-Turing computable functions. Functions that are computable continuous are also Borel-Turing computable.
\end{rem}

Computable continuous functions can be effectively approximated by sequences of computable continuous functions. This property is crucial to evaluate the answer to Question 2. The effective approximation of computable continuous functions is stated in the following corollary: 

\begin{cor}[\cite{boche2020shannon}]
	\label{cor:corollary1}
	Let $\{F_N\}_{N\in\N}$ and $\{G_N\}_{N\in\N}$ be computable sequences of computable continuous functions on $[0,1]$ with
	\begin{equation*}
		F_N(x) \leq F_{N+1}(x) \leq G_{N+1}(x) \leq G_N(x)
	\end{equation*}
	and
	\begin{equation*}
		\lim_{N\rightarrow\infty}F_N(x) = \lim_{N\rightarrow\infty}G_N(x) \eqqcolon \phi(x), \quad x\in[0,1].
	\end{equation*}
	Then $\phi:[0,1]\rightarrow\R$ is also a computable continuous function and $\{F_N\}_{N\in\N}$ and $\{G_N\}_{N\in\N}$ converge effectively to $\phi$.
\end{cor}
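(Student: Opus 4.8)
The plan is to reduce this statement to the classical fact, attributed to the Weierstrass-type approximation results for computable analysis, that a monotone limit of computable continuous functions that is ``squeezed'' between two such monotone sequences converges effectively. First I would show that $\phi$ is well defined and continuous: this is immediate from the sandwich $F_N(x)\le F_{N+1}(x)\le G_{N+1}(x)\le G_N(x)$, since for each fixed $x$ the sequence $\{F_N(x)\}$ is nondecreasing and bounded above by $G_1(x)$, hence convergent, and likewise $\{G_N(x)\}$ is nonincreasing and bounded below by $F_1(x)$; the hypothesis that both limits coincide then pins down $\phi(x)$. Continuity of $\phi$ will follow once effective uniform convergence is established, because a uniform limit of (uniformly) continuous functions is continuous.

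The heart of the argument is to produce, from the recursive data defining the two sequences, a recursive modulus for the convergence $F_N\to\phi$ and $G_N\to\phi$. The key trick is Dini's theorem made effective: on the compact interval $[0,1]$, the nonincreasing sequence of differences $G_N-F_N$ converges pointwise to $0$, and I would argue that this convergence can be made effectively uniform. Concretely, I would use the effective uniform continuity of the $F_N$ and $G_N$ together with their sequential computability to enumerate, for a target precision $2^{-M}$, a finite computable grid of points in $[0,1]$ fine enough (via the recursive moduli $d_N$) that controlling $G_N-F_N$ at the grid points controls it everywhere; then search for an $N=N(M)$ making $G_N-F_N\le 2^{-M-1}$ at every grid point. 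The subtlety — and what I expect to be the main obstacle — is that the modulus $d_N$ of each $F_N$ may itself grow with $N$, so the grid one needs gets finer as $N$ increases; one must argue that this search nonetheless terminates, i.e., that there is a \emph{single} recursive function $M\mapsto N(M)$ doing the job. This is where one invokes that $\{F_N\}$ and $\{G_N\}$ are \emph{computable sequences} of computable continuous functions, so the moduli $d_N$ are uniformly recursive in $N$, and the grid-evaluation is a total computable predicate; unbounded search over $N$ then yields the desired recursive function because Dini guarantees a witness exists.

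Having obtained a recursive $N(\cdot)$ with $\sup_{x}\bigl(G_{N(M)}(x)-F_{N(M)}(x)\bigr)\le 2^{-M}$, the remaining steps are routine. Since $F_{N(M)}(x)\le\phi(x)\le G_{N(M)}(x)$ for all $x$, we get $|\phi(x)-F_{N(M)}(x)|\le 2^{-M}$ uniformly, which is exactly effective convergence of $\{F_N\}$ to $\phi$ (after composing with the recursive reindexing), and symmetrically for $\{G_N\}$. Effective uniform convergence plus sequential computability of the $F_N$ then gives sequential computability of $\phi$: for a computable sequence $\{x_n\}$ in $[0,1]$, the double-indexed array $F_N(x_n)$ is a computable double sequence, and the recursive modulus $N(\cdot)$ lets us extract a computable sequence $2^{-n}$-close to $\phi(x_n)$. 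Finally, effective uniform continuity of $\phi$ follows by a standard $\varepsilon/3$ split: choose $N$ with $\|\phi-F_N\|_\infty$ small via $N(\cdot)$, then use the recursive modulus $d_N$ of $F_N$, so $d_\phi(K) \coloneqq d_{N(K+2)}(K+2)$ (or a similar explicit recursive combination) is the required modulus for $\phi$. This establishes both defining properties of Definition~\ref{def:compcont}, so $\phi$ is computable continuous, completing the proof.
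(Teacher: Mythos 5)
The paper does not actually prove this corollary; it is imported verbatim from \cite{boche2020shannon}, so there is no in-paper argument to compare against. Your reconstruction is the standard one (an effective Dini's theorem on the compact interval $[0,1]$, followed by the squeeze $F_{N}\le\phi\le G_{N}$ and the usual $\varepsilon/3$ closure argument for effective uniform limits), and it is correct in structure and in all essential steps: the uniformity of the moduli $d_N$ in $N$ is indeed what makes the unbounded search over $N$ a legitimate recursive procedure, and Dini guarantees termination. The only point I would tighten is your claim that the grid test is a ``total computable predicate'': the threshold comparison $G_N(x_i)-F_N(x_i)\le 2^{-M-1}$ is not decidable for computable reals, so you must replace it by a rational comparison against an approximation with explicit slack (accept if a $2^{-M-3}$-approximation is below $2^{-M-1}-2^{-M-3}$, say), arranged so that acceptance certifies the bound $2^{-M}$ and is guaranteed to occur once $G_N-F_N\le 2^{-M-2}$ uniformly. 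With that standard adjustment, which your precision bookkeeping already anticipates, the proof is complete.
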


There are other forms of computability including \emph{Banach-Mazur computability}, which is the weakest form of computability.

\begin{defn}
	\label{def:banachmazur}
	A function $f:\R_c\rightarrow\R_c$ is called \emph{Banach-Mazur computable} if $f$ maps any given computable sequence $\{x_n\}_{n\in\N}$ of computable real numbers into a computable sequence $\{f(x_n)\}_{n\in\N}$ of computable real numbers.
\end{defn}

In particular, Borel-Turing computability  and computable continuous functions imply Banach-Mazur computability, but not vice versa.

For an overview of the logical relations between different notions of computability we refer to \cite{avigad2014computability}.

We further need the concepts of a recursive set and a recursively enumerable set as defined in \cite{soare1978recursively}. These are used with the purpose of constructing sequences of computable channels used to study the computability of the feedback capacity function.

\begin{defn}
	\label{def:recursive}
	A set $\sA\subset\N$ is called \emph{recursive} if there exists a computable function $f$ such that $f(x)=1$ if $x\in\sA$ and $f(x)=0$ if $x\notin\sA$. 
\end{defn}

\begin{defn}
	\label{def:recursiveenumerable}
	A set $\sA\subset\N$ is \emph{recursively enumerable} if there exists a recursive function whose domain is exactly $\sA$.
\end{defn}

We have the following properties \cite{soare1978recursively}:
\begin{itemize}
	\item $\sA$ is recursive is equivalent to: $\sA$ is recursively enumerable and $\sA^c$ is recursively enumerable.
	\item There exist recursively enumerable sets $\sA\subset\N$ that are not recursive, i.e., $\sA^c$ is not recursively enumerable. This means there are no computable, i.e., recursive, functions $f:\N\rightarrow\sA^c$ with $[f(\N)]=\{m\in\N\colon \exists n \in\N\;\; \text{with}\;\; f(n)=m\}=\sA^c$.
\end{itemize}
%

\section{Computability Analysis}
\label{sec:computability_analysis}
In this section, we study the existence of a Turing machine that can take any computable tuple $\{W,f,s_0\}$ and compute the feedback capacity  $C_{FB}(\{W,f,s_0\})$. In particular, we consider computable FSCs as input parameters, i.e., $P\in\sP_c$ and $W\in\sW_c$, where  $\sP_c\coloneqq \sP_c(\sY\times\sS|\sX\times\sS)$ and $\sW_c\coloneqq \sP_c(\sY|\sX\times\sS)$ denote the sets of computable conditional probabilities. We show, that the capacity of FSCs with feedback $C_{FB}(\{W,f,s_0\})$ is not even computable according to the weakest form of computability, i.e., Banach-Mazur computability. Unfortunately this result provides Question 1 with a negative answer.

\begin{thm}
For all $|\sX|\geq 2$, $|\sY|\geq 2$, and $|\sS|\geq 2$, the feedback capacity function $C_{FB}(\{W,f,s_0\})\colon\sW_c\times\sS^{\sS\times\sX\times\sY}\times\sS\rightarrow\mathbb{R}$ of unifilar FSCs with time-invariant deterministic feedback with parameters $\{W,f,s_0\}$ is not Banach-Mazur computable.
\end{thm}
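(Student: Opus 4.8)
The plan is to reduce the membership problem of a recursively enumerable, non-recursive set to the hypothetical Banach--Mazur computability of $C_{FB}$, in the spirit of \cite{boche2020shannon}. It is enough to treat $|\sX|=|\sY|=|\sS|=2$; for larger alphabets one pads with input symbols duplicating $0$, output symbols of probability zero, and unreachable states, which affects neither the unifilar property nor the feedback capacity.

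First I would fix a recursively enumerable set $\sA\subseteq\N$ that is not recursive, and a recursive predicate $R(n,m)$, non-decreasing in $m$, with $n\in\sA\iff\exists m\,R(n,m)$; write $\mu(n)$ for the least such $m$ (defined iff $n\in\sA$). I would encode this enumeration into a real parameter by setting $p_n=2^{-\mu(n)}$ when $n\in\sA$ and $p_n=0$ otherwise. The crucial point is that $\{p_n\}_{n\in\N}$ is a \emph{computable sequence} of computable reals: reading $\mu(n)$ off $R$ as soon as $R$ has fired, and using $2^{-k-1}$ before that, gives a recursive $r\colon\N\times\N\to\Q$ with $|p_n-r_{n,k}|<2^{-k}$; yet $p_n>0$ holds exactly for $n\in\sA$, and $p_n$ may be arbitrarily small. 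I would then build, uniformly in $n$, a unifilar FSC $(W_n,f,s_0)$ with $\sS=\{s_{\mathrm b},s_{\mathrm g}\}$, $\sX=\sY=\{0,1\}$, whose only $n$-dependent datum is $p_n$: in the ``bad'' state the output is input-independent, $W_n(0\mid x,s_{\mathrm b})=p_n$ and $W_n(1\mid x,s_{\mathrm b})=1-p_n$; in the ``good'' state the channel is noiseless, $W_n(y\mid x,s_{\mathrm g})=\mathbbm{1}(y=x)$; the state evolves by $f(s_{\mathrm b},x,0)=s_{\mathrm g}$, $f(s_{\mathrm b},x,1)=s_{\mathrm b}$, $f(s_{\mathrm g},x,y)=s_{\mathrm g}$; and $s_0=s_{\mathrm b}$. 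Thus $s_{\mathrm g}$ is absorbing, it is reached from $s_{\mathrm b}$ only via the output event $\{y=0\}$ of probability $p_n$, the state is always reconstructible by the encoder from the feedback $y^{n-1}$, and $I(X^N\to Y^N)\le H(Y^N)\le N$.

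The heart of the matter, and the step I expect to cost the most care, is the dichotomy
\[
  C_{FB}(\{W_n,f,s_0\})=0\ \text{ if } n\notin\sA,\qquad C_{FB}(\{W_n,f,s_0\})=1\ \text{ if } n\in\sA .
\]
If $n\notin\sA$, then $p_n=0$, the chain started at $s_{\mathrm b}$ never leaves $s_{\mathrm b}$ and emits $Y^N=(1,\dots,1)$ deterministically, so the channel output carries no information and $C_{FB}(\{W_n,f,s_0\})=0$. If $n\in\sA$, then $p_n>0$: the alphabet sizes give $\overline{C}_{FB}\le 1$, while the single feedback strategy ``send $x_1=0$ (which reveals the state after one step), keep sending $0$ while the state is $s_{\mathrm b}$, and transmit uncoded message bits once the state is $s_{\mathrm g}$'' -- which is good for \emph{either} initial state, since $s_{\mathrm g}$ is reached after a $\mathrm{Geom}(p_n)$ number of steps -- gives $\tfrac1N\min_{s_0}I(X^N\to Y^N\mid s_0)\ge 1-\tfrac1{Np_n}\to 1$, so $\underline{C}_{FB}=\overline{C}_{FB}=C_{FB}=1$ by Theorem~\ref{thm:fbcapacity}. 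The decisive -- and perhaps least obvious -- point is that this value is $1$ for \emph{every} $p_n>0$, however small: the feedback-capacity map jumps at the degenerate channel $p_n=0$, and it is precisely this discontinuity, together with the undecidability of ``$p_n>0$'', that breaks computability. (One can also check that $(W_n,f,s_0)$ is indecomposable whenever $p_n>0$, so Theorem~\ref{thm:ifsccapacity} applies as well.)

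Finally I would close the reduction. The family $\{(W_n,f,s_0)\}_{n\in\N}$ is a computable sequence in $\sW_c\times\sS^{\sS\times\sX\times\sY}\times\sS$: the $f$- and $s_0$-coordinates are constant, and $\{W_n\}$ is computable because $\{p_n\}$ is. Were $C_{FB}$ Banach--Mazur computable, $\{C_{FB}(\{W_n,f,s_0\})\}_{n\in\N}$ would be a computable sequence of reals, i.e.\ there would be a recursive $\rho\colon\N\times\N\to\Q$ with $|C_{FB}(\{W_n,f,s_0\})-\rho_{n,k}|<2^{-k}$. Fixing $k=3$, the dichotomy forces $\rho_{n,3}<\tfrac12$ when $n\notin\sA$ and $\rho_{n,3}>\tfrac12$ when $n\in\sA$, so $n\mapsto\mathbbm{1}(\rho_{n,3}>\tfrac12)$ would be a total recursive decision procedure for $\sA$, contradicting that $\sA$ is not recursive. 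Hence $C_{FB}$ is not Banach--Mazur computable; since Borel--Turing computability implies Banach--Mazur computability, it is also not Borel--Turing computable, and the general-alphabet claim follows by the padding noted above.
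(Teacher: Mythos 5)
Your proposal is correct, and while it follows the same high-level strategy as the paper --- encode a recursively enumerable non-recursive set $\sA$ into a computable sequence of channel parameters that vanish exactly off $\sA$, exploit a discontinuity of $C_{FB}$ at parameter value $0$, and extract a decision procedure for $\sA$ --- the construction realizing this is genuinely different. The paper uses the \emph{initial state} as the detector: its channel $W_\lambda$ is decomposable at $\lambda=0$, with $C_{FB}(\{W_0,f,0\})=1$ and $C_{FB}(\{W_0,f,1\})=\log_2\big(1+2^{-g(\epsilon)}\big)<1$, but strongly connected (hence indecomposable) for every $\lambda>0$, so the difference of the two initial-state capacities is a known positive constant iff $\lambda=0$; a semi-decision procedure $\fT_*$ for positivity of that difference, run alongside the semi-decider $\fT_\sA$, then shows $\sA$ recursive. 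You instead fix one initial state and make the capacity itself jump: an absorbing noiseless state reachable from the start state only through an output event of probability $p_n$ gives $C_{FB}=1$ for every $p_n>0$ and $C_{FB}=0$ for $p_n=0$, and your justification of both values (deterministic all-ones output when $p_n=0$; wait-then-transmit with expected delay $1/p_n=o(N)$ when $p_n>0$) is sound. Because your two possible values are separated by a gap of size $1$ known in advance, you can close with a single fixed-precision query ($\rho_{n,3}\gtrless\tfrac12$) rather than the paper's two-semidecider argument, which is a more elementary final step (the paper's gap is also a known computable constant, so its $\fT_*$ machinery is more general than strictly necessary there). Two points worth making explicit if you write this up: (i) your channels for $p_n>0$ are indecomposable but \emph{not} strongly connected ($s_{\mathrm g}$ is absorbing); this is fine since neither the theorem statement nor Theorem~\ref{thm:ifsccapacity} requires strong connectivity, and the paper's own $\lambda=0$ channel is not strongly connected either, but it should be stated rather than left implicit; (ii) when padding the alphabets and the state set you must still define $f$ on the added states and on the zero-probability $(x,y)$ pairs (any choice works, since they do not affect $C_{FB}(\{W_n,f,s_{\mathrm b}\})$), exactly as the paper does in its Steps I and II.
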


The fact that the feebdack capacity of FSCs is not Banach-Mazur computable implies automatically that the feedback capacity of FSCs is not \tcr{Borel-Turing} computable. This leads us to the following \tcr{corollary}:

\begin{cor}\label{cor:notturingcomp}
    For all $|\sX|\geq 2$, $|\sY|\geq 2$, and $|\sS|\geq 2$, the feedback capacity function $C_{FB}(\{W,f,s_0\})\colon\sW_c\times\sS^{\sS\times\sX\times\sY}\times\sS\rightarrow\mathbb{R}$ of unifilar FSCs with time-invariant deterministic feedback with parameters $\{W,f,s_0\}$ is not \tcr{Borel-Turing computable.}
\end{cor}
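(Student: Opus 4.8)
The plan is to read off Corollary~\ref{cor:notturingcomp} from the preceding theorem by invoking the hierarchy among notions of computability set up in Section~\ref{sec:computability_numbers}. By the remark following Definition~\ref{def:banachmazur}, Borel-Turing computability implies Banach-Mazur computability, and any Turing-computable real function is in particular Borel-Turing computable (it must at least transform a representation of its argument into a representation of its value). So the whole argument is a contrapositive: the preceding theorem asserts that $C_{FB}(\{W,f,s_0\})$, as a map on $\sW_c\times\sS^{\sS\times\sX\times\sY}\times\sS$, is not Banach-Mazur computable; hence it is not Borel-Turing computable, and therefore not Turing computable. The alphabet constraints $|\sX|\ge 2$, $|\sY|\ge 2$, $|\sS|\ge 2$ are simply inherited from the theorem, so nothing new is needed there.

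I would make one step explicit, because it is the only place a careless formulation could slip, namely the passage from Borel-Turing to Banach-Mazur computability at the level of quantifiers. Suppose $\fT$ is an algorithm that transforms every representation of a computable tuple $\{W,f,s_0\}$ — the finitely-valued objects $f$ and $s_0$ supplied directly, the entries of $W$ supplied as representations $(a,b,s)$ of computable reals — into a representation of $C_{FB}(\{W,f,s_0\})$. Given any computable sequence $\{(W_n,f_n,s_{0,n})\}_{n\in\N}$ of such tuples, i.e., a single recursive procedure producing, uniformly in $n$, a representation of the $n$-th tuple, precomposing that procedure with $\fT$ yields a single recursive procedure producing, uniformly in $n$, a representation of $C_{FB}(\{W_n,f_n,s_{0,n}\})$. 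Thus $\{C_{FB}(\{W_n,f_n,s_{0,n}\})\}_{n\in\N}$ would be a computable sequence of computable real numbers, which is exactly what the preceding theorem forbids; hence no such $\fT$ exists.

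Consequently the corollary itself presents no obstacle: the real work sits in the theorem it leans on. For completeness I would recall where that work lies — it is the halting-problem reduction announced in the introduction: one fixes a recursively enumerable but non-recursive set $\sA\subset\N$ (which exists by the facts recalled after Definition~\ref{def:recursiveenumerable}), builds a computable sequence $\{W_n,f_n,s_0\}$ of unifilar FSCs over the minimal alphabets whose feedback capacities take one fixed computable value for indices in $\sA$ and another for indices in $\sA^c$, separated by a fixed positive computable gap, and then observes that Banach-Mazur computability of $C_{FB}$ would, via the uniform error bound carried by any computable sequence of computable reals, let one decide membership in $\sA$ — contradicting its non-recursiveness. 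Granting that theorem, the corollary needs only the two implications ``Turing $\Rightarrow$ Borel-Turing'' and ``Borel-Turing $\Rightarrow$ Banach-Mazur'', so I anticipate no difficulty beyond stating them correctly.
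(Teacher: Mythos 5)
Your proposal is correct and matches the paper's own route: the corollary is obtained purely from the implication chain Turing $\Rightarrow$ Borel-Turing $\Rightarrow$ Banach-Mazur, taken contrapositively against the preceding theorem's Banach-Mazur non-computability result. Your explicit unpacking of why a Borel-Turing algorithm would map computable sequences of tuples to computable sequences of reals is a welcome clarification of a step the paper leaves implicit, but it is the same argument.
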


Corollary \ref{cor:notturingcomp} states that the capacity of FSCs with feedback is not Borel-Turing computable. This implies that there is no Turing machine that for fiexed \tcr{alphabets} $|\sX|\geq 2$, $|\sY|\geq 2$, and $|\sS|\geq 2$, takes $\{W,f,s_0\}$ as inputs and computes the capacity $C_{FB}(\{W,f,s_0\})$. This gives us a negative answer to Question~1.

\begin{proof}
We consider the set of computable FSCs. The capacity is a function $C_{FB}\colon\sW_c\times\sS^{\sS\times\sX\times\sY}\times\sS\rightarrow \R_{0+}$. To prove the computability we use an indirect proof. We assume that the feedback capacity $C_{FB}$ is Borel-Turing computable and we prove the opposite by contradiction. The proof is organized as follows:

\begin{itemize}
\item  We design a suitable class of rational unifilar FSCs $\{W_\lambda,f\}_{\lambda\in [0,\frac{1}{2}]\cap\Q}$ characterized by the parameter $\lambda$. 
 
 \item We consider a recursively enumerable non recursive set $\sA$. The elements of the recursive enumerable set are listed by a unique recursive function $\varphi_A\colon \N\rightarrow\N$. There is a Turing machine $\fT_\sA$ that stops for input $n$ if and only if $n\in\sA$. Otherwise $\fT_\sA$ runs forever.
 
 \item\tco{ We generate a computable double sequences of rational numbers $\{\lambda_{n,m}\}_{{n,m}\in\N}$ using the Turing machine $\fT_{\sA}$. We use $\{\lambda_{n,m}\}_{{n,m}\in\N}$ to construct a computable double sequence of rational unifilar FSCs $\{W_{\lambda_{n,m}},f\}_{{n,m}\in\N}$} from the class of unifilar FSCs $\{W_\lambda,f\}_{\lambda\in [0,\frac{1}{2}]\cap\Q}$. This sequence of rational unifilar FSCs converges effectively to the computable sequence of computable FSCs $\{W_{\lambda_{n}^*},f\}_{{n}\in\N}$. Hence, the set $\sA$ is encoded in the sequence $\{W_{\lambda_{n}^*},f\}_{{n}\in\N}$.
 
 \item We define the function $\phi(\{W,f\})=C_{FB}(\{W,f,0\})-C_{FB}(\{W,f,1\})$. Since $C_{FB}$ is assumed to be a computable function, then $\phi$ is also Borel-Turing computable. This would mean that the sequence $\{\phi(\{W_{\lambda_{n}^*},f\})\}_{n\in\N}$  is a computable sequence of computable reals. With this computable sequence of computable reals we can build a Turing machine $\fT_*$ that stops for input $n$ if and only if $\phi(\{W_{\lambda_{n}^*},f\})>0$. Thus, $\fT_*$ stops if $n\in\sA^c$ which is a contradiction, since it would mean that $\sA$ a recursive set. Hence, the  assumption that $C_{FB}(\{W,f,s_0\})$ is computable is wrong. Even if $C_{FB}$ were Banach-Mazur computable, then it would yield a solution the halting problem, which has been proven to be unsolvable.
\end{itemize}

We first introduce the concept of distance. For two FSCs $P_1,P_2\in\sP(\sY\times\sS|\sX\times \sS)$ we define the distance between $P_1$ and $P_2$ based on the total variation distance as 
\begin{align*}
D(P_1&,P_2)\\
&=\max_{s'\in\sS}\max_{x\in\sX}\sum_{y\in\sY}\sum_{s\in\sS}|P_1(y,s|x,s')-P_2(y,s|x,s')|.
\end{align*}

We start proving the result for $|\sX|=|\sY|=|\sS|=2$. Then we extend it to $|\sX|\geq 2$, $|\sY|\geq 2$,
 and $|\sS|\geq 2$.

We consider the channel
	\begin{equation}
		W(y_n|x_n,0) = \begin{pmatrix}
			1 & 0 \\
			0 & 1
		\end{pmatrix}\!,
		W(y_n|x_n,1) = \begin{pmatrix}
			1\!-\!\epsilon & \epsilon \\
			0 & 1
		\end{pmatrix}
		\label{eq:p}
	\end{equation}
for some $\epsilon\in (0,\frac{1}{2})\cap\mathbb{Q}$, i.e., for state $s_{n-1}=0$ the channel is noiseless; for $s_{n-1}=1$ it is noisy. Further, we consider the state transition function $f\colon\sS\times\sX\times\sY\rightarrow\sS$ described by the state diagram in Fig. \ref{fig:stf}. The nodes represent the states and the tuple of the edges represent the input and output symbols $(x_n,y_n)$ of the channel.
	
\begin{figure}[h]
\centering
	\begin{tikzpicture}[thick,scale=0.75, every node/.style={scale=0.75}]
	\node[state] [] (s0) {$0$};
	\node[state, right of=s0, xshift=3cm] (s1) {$1$};

	\draw (s0) edge [loop left] node {$(0,0)\land(1,1)$} (s0);
		\path[->] (s0) edge  [bend left, above] node {$(0,1)\land(1,0)$} (s1);
		\path[->] (s1) edge [loop right] node {$(0,0)\land(1,1)\land(0,1)$} (s1);
		 \path[->] (s1) edge [bend left, below] node {$(1,0)$} (s0);
	\end{tikzpicture}
	\caption{Diagram of the state transition function $f$.}
	\label{fig:stf}
\end{figure}

The state of the channel $s_n$ given the input $x_n$, output $y_n$ and previous state $s_{n-1}$ is also shown in Table \ref{statetranstable}.

\begin{table}
\centering
\caption{The state $s_n$ given $x_n$, $y_n$ and $s_{n-1}$.}
\label{statetranstable}
 \begin{tabular}{||c|c c c c c c c c||} 
 \hline
 $x_n$ & 0 & 0 & 0 & 0 & 1 & 1 & 1 & 1 \\  

 $y_n$ & 0 & 0 & 1 & 1 & 0 & 0 & 1 & 1 \\ 
 
 $s_{n-1}$ & 0 & 1 & 0 & 1 & 0 & 1 & 0 & 1 \\[0.5ex]
 \hline\hline
 $s_n$ & 0 & 1 & 1 & 1 & 1 & 0 & 0 & 1 \\
 \hline  
\end{tabular}

\end{table}

The channel $\{W,f,0\}$ corresponds to a discrete memoryless channel. Since the initial state is $s_0=0$, i.e. the channel $W(y_n|x_n,0)$ is noiseless,  then the only two possible input output tuples are $(0,0)$ and $(1,1)$. Applying the state transition function $f$ to the tuples $(x_1,y_1,s_0)\in\{(0,0,0), (1,1,0)\}$ we get that for both tuples the next state is $s_1=0$. This implies that if the initial state is $0$, then the state stays $0$ forever.
Applying Lemma \ref{lem:DirectedInfo} to the directed information for this channel, we get
\begin{align*}
I(X^N\rightarrow Y^N|s_0=0) = I(X^N;Y^N|s_0=0).
\end{align*}
This and the fact that $W(y|x,0)$ is a binary noiseless channel imply that the FSC feedback capacity with time-invariant deterministic feedback and initial state $s_0=0$ is
\begin{align*}
C_{FB}(\{W,f,0\}) = 1.
\end{align*}

The channel $\{W,f,1\}$ corresponds to the discrete memoryless channel $W(y|x,1)$ with $x\in\sX,y\in\sY$. \tco{Similar} to the line of arguments for $\{W,f,0\}$, if the initial state is  $s_0=1$ then the channel has only three possible input output tuples  $(0,0),(0,1)$ and $(1,1)$. Applying the state transition function $f$ to the tuples $(x_1,y_1,s_0)\in\{(0,0,1), (0,1,1), (1,1,1)\}$ we get that for all three tuples the next state is $s_1=1$. Meaning that if the initial channel state is $1$, then the channels stays in state $1$ forever. Applying Lemma \ref{lem:DirectedInfo} we have that 
\begin{equation}\label{eq:equality_dirinf}
I(X^N\rightarrow Y^N|s_0=1) = I(X^N;Y^N|s_0=1).
\end{equation}

Note that the channel $\{W,f,1\}$ is a Z-channel. Due to \eqref{eq:equality_dirinf} we see that the FSC feedback capacity for the case of time-invariant deterministic feedback and initial state $s_0=1$  is 
\begin{align*}
C_{FB}(\{W,f,1\}) &=\max_{p\in\sP(\sX)}H_2(p(1-\epsilon))-pH_2(\epsilon)\\
&=\log_2\Big(1+2^{-g(\epsilon)}\Big)
\end{align*}
with $g(\epsilon)=\frac{H_2(\epsilon)}{1-\epsilon}$. The optimal input distribution is $p(0)=\Big[(1-\epsilon)\Big(1+2^{\frac{H_2(\epsilon)}{(1-\epsilon)}}\Big)\Big]^{-1}$ and $p(1)=1-\Big[(1-\epsilon)\Big(1+2^{\frac{H_2(\epsilon)}{(1-\epsilon)}}\Big)\Big]^{-1}$.

Next we show that $C_{FB}(\{W,f,0\})$ and $C_{FB}(\{W,f,1\})$ cannot be simultaneously Banach-Mazur computable.
Let $\sA\in\N$ be an arbitrary recursively enumerable but not recursive set. Let  $\fT_\sA$ be a Turing machine that stops for input $n$ if and only if $n\in\sA$.  Otherwise $\fT_\sA$ runs forever. Such a Turing machine can easily be found as argued next: Let $\varphi_\sA:\N\rightarrow\N$ be a recursive function that lists all elements of the set $\sA$ and for which $\varphi_\sA:\N\rightarrow\sA$ is a unique function.
	
	Let $n\in\N$ be arbitrary. The Turing machine $\fT_\sA$ with input $n$ is defined as follows: We start with $l=1$ and compute $\varphi_\sA(1)$. If $n=\varphi_{\sA}(1)$, then the Turing machine stops. In the other case, the Turing machine computes $\varphi_\sA(2)$. Similarly, if $n=\varphi_{\sA}(2)$, then the Turing machine stops and otherwise, it continues computing the next element. It is clear that this Turing machine stops if and only if $n\in\sA$. 

 Assume $C_{FB}(\{W,f,0\})$ and $C_{FB}(\{W,f,1\})$ are both Borel-Turing computable. For $\lambda\in (0,\frac{1}{2}]\cap\R_c$ we consider the channel $W_\lambda\in\sW_c$ with

 \begin{align}
		W_\lambda(y_n|x_n,0) &= \begin{pmatrix}
			1-\lambda & \lambda \\
			\lambda & 1-\lambda
		\end{pmatrix},\nonumber\\
		W_\lambda(y_n|x_n,1) &= \begin{pmatrix}
			1-\epsilon & \epsilon \\
			\lambda & 1-\lambda
		\end{pmatrix}. \label{eq:Wlambda}
	\end{align}
For  $\lambda\in [0,\frac{1}{2}]\cap\tcr{\Q}$ both $W_\lambda(y_n|x_n,0)$ and $W_\lambda(y_n|x_n,1)$ are by all means computable probability distributions.

For $\lambda = 0$, we have 
\begin{align*}
	C_{FB}(\{W_0,f,0\})-&C_{FB}(\{W_0,f,1\}) 
	\\&= 1 -\log_2\Big(1+2^{-g(\epsilon)}\Big) > 0.
\end{align*}

Note that for $\lambda\in (0,\frac{1}{2}]\cap\tcr{\Q}$ and $f$ as described in Fig. \ref{fig:stf} and Table \ref{statetranstable} the FSC ${\{W_\lambda,f,s_0\}}$ is strongly connected. For $s_{0}=0$  the FSC achieves the state $s_1=1$ if the input output tuple $(x_1,y_1)$ of the channel is  either $(0,1)$ or $(1,0)$. Since $W_\lambda(1|0,0)=W_\lambda(0|1,0)=\lambda>0$ the channel can reach the state $s_1=1$ in one time-step for any input distribution $p(x_1|0)>0$. Similarly, for  $s_{0}=1$,  the FSC achieves the state $s_1=0$ if the input output tuple $(x_1,y_1)$ of the channel is $(1,0)$. Since $W_\lambda(0|1,1)=\lambda>0$, the channel can reach the state $s_1=0$ in one time-step for any input distribution with $p(0|0)>0$.
This implies that the FSC ${\{W_\lambda,f,s_0\}}$ with $s_0\in\sS$ is also indecomposable. 

Hence, for every $\lambda\in(0,\frac{1}{2}]$ we have
\begin{equation}
	C_{FB}(\{W_\lambda,f,0 \})=C_{FB}(\{W_\lambda,f,1\}).
\end{equation}

Next, we generate the indirect proof. First, we build a Turing machine, that encodes the recursively enumerable non recursive set $\sA$ in a sequence of unifilar FSCs. For the construction of the Turing machine, we rely on the  construction introduced in \cite{pour1960comparison}[~Case I, page 336]. This plays an important role in emphasizing the properties of the capacity of FSCs with feedback. Similar constructions have been developed in \cite{boche2020denial} and \cite{boche2019algorithmic}.

Let $n\in\N$ be arbitrary, for every $n\in\N$ and $m\in\N$ let
			\begin{align*}
				\lambda_{n,m} = \begin{cases}
				\frac{1}{2^l} & \fT_\sA \text{ stops for input } n \text{ after } l\leq m \text{ steps} \\
				\frac{1}{2^m} &\fT_\sA \text{ does not stop for input } n \text{ after } m \text{ steps}. 
				\end{cases}
			\end{align*}
Then the sequence \tco{$\{\lambda_{n,m}\}_{n,m\in\N}$} is a computable double sequence of rationals. For arbitrary $n\in\N$ and for all $m\geq M$, $m,M\in\N$, we have
\begin{equation}\label{effconv}
|\lambda _{n,m}-\lambda_{n,M}|<\frac{1}{2^M}.
\end{equation}•
To prove \eqref{effconv} we will consider both cases: $\fT_\sA$ stops for input $n$ and $\fT_\sA$ does not stop for input $n$.
\begin{itemize}
\item \emph{$\fT_\sA$ stops for input $n$ after $l\leq M$ iterations}: In this case $\lambda_{n,M}=\lambda_{n,m}$ so $|\lambda _{n,m}-\lambda_{n,M}|=0$.
\item  \emph{$\fT_\sA$ has not stopped for input $n$ after $M$ iterations}: For every $m\geq M$ it holds that $\lambda_{n,M}\geq\lambda_{n,m}$, meaning that $|\lambda _{n,m}-\lambda_{n,M}|=\lambda_{n,M}-\lambda_{n,m}=\frac{1}{2^M}-\lambda_{n,m}<\frac{1}{2^M}$.
\end{itemize}•
The sequence \tco{$\{\lambda_{n,m}\}_{n,m\in\N}$} is a computable double sequence of rationals that converges effectively in $m$. This implies that for every $n\in\N$ the sequence  $\{\lambda_{n,m}\}_{m\in\N}$ converges effectively to its limit $\lambda^*_n$ and the limit is a computable real number $\lambda^*_n\in\R_c$. Since \tco{$\{\lambda_{n,m}\}_{n,m\in\N}$} is a computable double sequence of rationals such that as $m\rightarrow\infty$, $\lambda_{n,m}\rightarrow\lambda^*_n$, then $\{\lambda_n^*\}_{n\in\N}$ is a sequence of computable real numbers.  It further holds $\lambda_n^*\geq0$ with equality if and only if the Turing machine $\fT_\sA$ does not stop for input $n$. 

We consider the computable double sequence of rational unifilar FSCs $\{P_{\lambda_{n,m}}\}_{n,m\in\N^2}$ defined by the computable double sequence of rational channels \tco{$\{W_{\lambda_{n,m}}\}_{n,m\in\N}$} and the function $f$ defined in Table \ref{statetranstable}. 

For arbitrary $n\in\N$ and for all $m\geq M$, $m,M\in\N$, we have
\begin{align}
D&(P_{\lambda _{n,m}},P_{\lambda _{n,M}}) \nonumber\\
&= \max_{s'\in\sS}\max_{x\in\sX}\sum_{y\in\sY}\sum_{s\in\sS}|P_{\lambda _{n,m}}(y,s|x,s')-P_{\lambda _{n,M}}(y,s|x,s')|\nonumber\\
&= \max_{s'\in\sS}\max_{x\in\sX}\sum_{y\in\sY}\sum_{s\in\sS}|W_{\lambda _{n,m}}(y|x,s')-W_{\lambda _{n,M}}(y|x,s')|\nonumber\\
&\quad\quad\times\mathbbm{1}(s= f(s', x,y))\label{eq_dif1}\\
&= \max_{s'\in\sS}\max_{x\in\sX}\sum_{y\in\sY}|W_{\lambda _{n,m}}(y|x,s')-W_{\lambda _{n,M}}(y|x,s')|\nonumber\\
& = 2|\lambda _{n,m}-\lambda _{n,M}|<\frac{1}{2^{M-1}}\label{eq_dif2}
\end{align}
where \eqref{eq_dif1} holds since \tco{$\{P_{\lambda_{n,m}}\}_{n,m\in\N}$} are unifiar. \eqref{eq_dif2} results from \eqref{eq:Wlambda} in the following way: 
\begin{align*}
\sum_{y\in\sY}|W_{\lambda _{n,m}}&(y|0,0)-W_{\lambda _{n,M}}(y|0,0)|\\
&=\sum_{y\in\sY}|W_{\lambda _{n,m}}(y|1,0)-W_{\lambda _{n,M}}(y|1,0)|\\
&=\sum_{y\in\sY}|W_{\lambda _{n,m}}(y|1,1)-W_{\lambda _{n,M}}(y|1,1)|\\
&=|(1-\lambda _{n,m})-(1-\lambda _{n,M})|+|\lambda _{n,m}-\lambda _{n,M}|\\
&=2|\lambda _{n,m}-\lambda _{n,M}|
\end{align*}
and 
\begin{equation*}
\sum_{y\in\sY}|W_{\lambda _{n,m}}(y|0,1)-W_{\lambda _{n,M}}(y|0,1)|=0.
\end{equation*}

As a result of \eqref{effconv} and \eqref{eq_dif2}, we have that ${W_{\lambda_{n,m}}}\rightarrow{W_{\lambda^*_n}}$ as $m\rightarrow\infty$,  for every $n\in\N$. Hence $\{W_{\lambda_n^*}\}_{n\in\N}$ is a sequence of computable channels. 

Further we use a sequence of computable unifilar FSCs $\{P_{\lambda^*_n}\}_{n\in\N}=\{W_{\lambda^*_n},f\}_{n\in\N}$ where the transition state function is fixed.

Since $C_{FB}(\{W,f,0\})$ and $C_{FB}(\{W,f,1\})$ are assumed to be Borel-Turing computable functions, the difference \tco{$\phi(\{W,f\})=C_{FB}(\{W,f,1\})-C_{FB}(\{W,f,0\})$} is a Borel-Turing computable function as well. Then, the sequence $\{\mu_n\}_{n\in\N}$ with 
	\begin{equation*}
		\mu_n = \tco{\phi(\{W_{\lambda_n^*},f\})}, \quad n\in\N,
	\end{equation*}
	is a computable sequence of computable real numbers. With this, we find a computable double sequence $\{\nu_{n,m}\}_{n,m\in\N}$ of rational numbers with
	\begin{equation*}
		\big|\mu_n-\nu_{n,m}\big| < \frac{1}{2^m}.
	\end{equation*}
	For every $n$, we can consider the following Turing machine $\fT_*$: For input $n$, we set $m=1$ and check if
	\begin{equation*}
		\nu_{n,1} > \frac{1}{2}
	\end{equation*}
	is satisfied. If this is true, the Turing machine stops. Otherwise, we set $m=2$ and check if
	\begin{equation*}
	\nu_{n,2} > \frac{1}{4}
	\end{equation*}
	is satisfied. If this is true, the Turing machine stops. Otherwise, it continues as described. Next, we show that this Turing machine $\fT_*$ stops for input $n$ if and only if $\mu_n>0$.
	
	``$\Leftarrow$'' If $\mu_n>0$, then there exists an $\tilde{M}$ with
	\begin{equation*}
		\frac{1}{2^{\tilde{M}}} < \frac{\mu_n}{2}
	\end{equation*}
	so that
	\begin{align*}
		\mu_n &= \mu_n - \nu_{n,\tilde{M}} + \nu_{n,\tilde{M}} \leq \big|\mu_n - \nu_{n,\tilde{M}}\big| + \nu_{n,\tilde{M}} \\
			&< \frac{1}{2^{\tilde{M}}} + \nu_{n,\tilde{M}} < \frac{\mu_n}{2} + \nu_{n,\tilde{M}}, 
	\end{align*}
	i.e., the Turing machine $\fT_*$ stops for input $n$ within $\tilde{M}$ steps.
	
	``$\Rightarrow$'' It holds $\nu_{n,\hat{M}}>\frac{1}{2^{\hat{M}}}$ for a certain $\hat{M}$. Then,
	\begin{align*}
		\frac{1}{2^{\hat{M}}} &< \nu_{n,\hat{M}} = \nu_{n,\hat{M}} - \mu_n + \mu_n \\
			&\leq \big|\nu_{n,\hat{M}} - \mu_n\big| + \mu_n < \frac{1}{2^{\hat{M}}} + \mu_n
	\end{align*}
	so that $\mu_n>0$ is true. 

This means that there exists a Turing machine $\fT_\sS$ with
		\begin{align*}
		\fT_\sS(n) = \begin{cases}
			\tco{n\in\sA} &\text{ if } \fT_\sA \text{ stops for input } n \\
			\tco{n\in\sA^c} &\text{ if } \fT_* \text{ stops for input } n.
			\end{cases}
	\end{align*}
	This implies that $\sA$ is a recursive set, which is a contradiction. This contradiction shows that $C_{FB}(\{W,f,0\})$ and $C_{FB}(\{W,f,1\})$ cannot be Banach-Mazur computable. This immediately implies that they cannot be Borel-Turing computable as well.
	
	To extend the proof to $|\sX|\geq 2$, $|\sY|\geq 2$, and $\sS\geq 2$, we will divide the extension in two steps: 
	\begin{itemize}
		\item The state set remains binary and the input and output alphabets may grow, i.e.,  $|\sS|= 2$ and $|\sX|\geq 2$, $|\sY|\geq 2$.
		\item We allow the state set to grow, i.e., $|\sS|\geq 2$.
	\end{itemize}
	
	\textbf{Step I}: For $|\sX|\geq 2$, $|\sY|\geq 2$, and $|\sS|=2$ arbitrary, we take the sequence of parameters $\{W_\lambda, f\}$ as above and extend them as follows: We set $W_\lambda(y_n|x_n,s_{n-1}) =0 $ for $y_n\in\sY\setminus\{0,1\}$, $x_n\in\sX$ and $s_{n-1}\in\sS$ and also for $y_n\in\sY$, $x_n\in\sX\setminus\{0,1\}$ and $s_{n-1}\in\sS$. For every pair $(x_n,y_n)\in(\sX\setminus\{0,1\}\times\sY )\cup(\sX\times\sY\setminus\{0,1\} )$ we define the transition state function to be $f(x_n,y_n,s_{n-1})=s_{n-1}$.
	
	\textbf{Step II}: Let $|\sS|\geq 2$. For every $s\in\sS\setminus\{0,1\}$ set 
	\begin{align*}
		W_\lambda(0,|0,s)&=1-\Big(\epsilon+\Big(\frac{1}{2}-\epsilon\Big)^{s-1}\Big),\\
		 W_\lambda(1,|0,s)&=\epsilon+\Big(\frac{1}{2}-\epsilon\Big)^{s-1},\\
		W_\lambda(0,|1,s)&=0,\quad W_\lambda(1,|1,s)=1.
	\end{align*}
	Note that for $(x,y)\in\{0,1\}^2$, for every $s$ we have the Z-channel with probability of transmitting bit $0$ incorrectly of $\delta_s= \epsilon+\Big(\frac{1}{2}-\epsilon\Big)^{s-1}$. It also holds, that for every $s\in\sS\setminus\{0,1\}$, $\epsilon<\delta_s\leq \frac{1}{2}$, hence the channel at state $s=1$ is less noisy than the channels at states $s\geq 2$.
	
	We set $W_\lambda(y_n|x_n,s) =0 $ for $y_n\in\sY\setminus\{0,1\}$, $x_n\in\sX$ and $s_{n-1}\in\sS$ and also for $y_n\in\sY$, $x_n\in\sX\setminus\{0,1\}$ and $s_{n-1}\in\sS$.
	
	Next we modify the state transition function $f$ as follows:
For $|\sS|=3$, we have a new state $s=2$. For $(x_n,y_n,s_{n-1})=(0,1,0)$, we modify the function $f$ by setting the next state $s_{n}$ to be $s_{n}=f(0,1,0)=2$. For the state $s_{n-1}=2$ we complete the state transition function
 \begin{equation*}
 f(x_n,y_n,2)= \begin{cases}
        0\quad \text{for every $(x_n,y_n)$ s.t. $W(x_n,y_n,2)>0$,}
        \\
        2 \quad \text{for every $(x_n,y_n)$ s.t. $W(x_n,y_n,2)=0$}.
        \end{cases}
 \end{equation*}
The diagram of the state transition function $f$ for  $|\sX|=|\sY|=2$ and $|\sS|=3$ is illustrated in Fig. \ref{fig:stf2}.
\begin{figure}[h]
\centering
	\begin{tikzpicture}[thick,scale=0.7, every node/.style={scale=0.65}]
	\node[state] [] (s0) {$0$};
	\node[state, right of=s0, xshift=3cm] (s1) {$1$};
	\node[state, left of=s0, xshift=-3cm] (s2) {$2$};

		\draw (s0) edge [loop above] node {$(0,0)\land(1,1)$} (s0);
		\path[->] (s0) edge  [bend left, above] node {$(1,0)$} (s1);
		\path[->] (s0) edge  [bend right, above] node {$(0,1)$} (s2);
		\path[->] (s1) edge [loop right]  node[align=center] {$(0,0)\land(1,1)$\\$\land(0,1)$} (s1);
		 \path[->] (s1) edge [bend left, below] node {$(1,0)$} (s0);
		 \path[->] (s2) edge  [bend right, below] node {$(0,0)\land(0,1)\land(1,1)$} (s0);
		 \draw (s2) edge [loop left] node {$(1,0)$} (s2);
	\end{tikzpicture}
	\caption{Diagram of the state transition function $f$ for $|\sX|=|\sY|=2$ and $|\sS|=3$.}
	\label{fig:stf2}
\end{figure}

If $|\sS|\geq 4$, we extend the transition function iteratively as described above:
\begin{itemize}
\item Let $2 \leq s<|\sS|-1$. For $s_{n-1}=s$ we set $s_n=f(x_n,y_n,s_{n-1})$  to be 
	 \begin{equation*}\resizebox{.96\hsize}{!}{$
 \!\!\!\! f(x_n,y_n,s)\!=\! \begin{cases}
        \!0 \;\;\;\text{$(x_n,y_n)\in(\sX\times\sY)\setminus \{(0,1)\}$ s.t. $W(x_n,y_n,s)>0$,}
        \\
        \! s \;\;\; \text{$(x_n,y_n)\in(\sX\times\sY)\tcr{\setminus \{(0,1)\}}$  s.t. $W(x_n,y_n,s)=0$,}\\
        \! s+1 \;\;\;\text{$(x_n,y_n)=(0,1)$  s.t. $W(x_n,y_n,s)=0$.}
        \end{cases}$}
 \end{equation*}
 \item For $s =|\sS|-1$ and $ s \geq 2$ we have 
  \begin{equation*}\resizebox{.96\hsize}{!}{$
 f(x_n,y_n,s)= \begin{cases}
        0\;\;\; \text{$(x_n,y_n)\in\sX\times\sY$ s.t. $W(x_n,y_n,s)>0$,}
        \\
        s \;\;\; \text{$(x_n,y_n)\in\sX\times\sY$  s.t. $W(x_n,y_n,s)=0$}.
        \end{cases}$}
 \end{equation*}
\end{itemize}
This way the FSCs with $|\sX|\geq 2,$ $|\sY|\geq 2$, and $|\sS|\geq 2$ preserve the properties of the FSCs constructed above, i.e., they are unifilar and strongly connected.
	\end{proof}

\begin{rem}
We showed that capacity of FSCs with feedback is not Banach-Mazur computable for a special class of FSCs, the unifilar FSCs. This result holds for more general classes of FSCs as well.
\end{rem}

\section{Computability Analysis of Achievability and Converse}
\label{sec:computability_a&c}
In \tco{the previous section}, we showed that the capacity of FSCs with feedback is not a computable function. \tco{Here}, we are interested in finding computable tight upper and lower bounds on the feedback capacity function and therefore a computable representation of achievability and converse. We show that it is not possible to find upper and lower bounds that are simultaneously computable. This provides us with a negative answer to Question 2.

\begin{thm}
	\label{the:fsctheorem5}
	For $|\sX|\geq2$, $|\sY|\geq2$, and $|\sS|\geq2$ arbitrary but fixed,  there exists an $s_0\in\sS$ such that the following holds: There exists no computable sequences $\{F_N\}_{N\in\N}$ and $\{G_N\}_{N\in\N}$ of computable continuous functions with
	\begin{enumerate}
		\item $F_N:\tco{\sW_c}\times\sS^{\sS\times\sX\times\sY}\rightarrow\R$ and $G_N:\tco{\sW_c}\times\sS^{\sS\times\sX\times\sY}\rightarrow\R$, $N\in\N$,
		\item $F_N(W,f)\leq C_{FB}(\{W,f,s_0\})$, $W\in\tco{\sW_c}$, $f\in\sS^{\sS\times\sX\times\sY}$, $N\in\N$, and $\lim_{N\rightarrow\infty}F_N(W,f)=C_{FB}(\{W,f,s_0\})$ for all $W\in\tco{\sW_c}$, $f\in\sS^{\sS\times\sX\times\sY}$,
		\item $C_{FB}(\{W,f,s_0\})\leq G_N(W,f)$, $W\in\tco{\sW_c}$, $f\in\sS^{\sS\times\sX\times\sY}$, $N\in\N$, and $\lim_{N\rightarrow\infty}G_N(W,f)=C_{FB}(\{W,f,s_0\})$ for all $W\in\tco{\sW_c}$, $f\in\sS^{\sS\times\sX\times\sY}$.
	\end{enumerate}
\end{thm}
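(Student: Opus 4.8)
The plan is to argue by contradiction, assuming that tight computable continuous lower and upper bounds exist for \emph{both} initial states $s_0=0$ and $s_0=1$, and to derive a contradiction from the very construction used in the proof above. Recall from that proof the fixed transition function $f$ (Fig.~\ref{fig:stf} and Table~\ref{statetranstable}) and the affine family $\lambda\mapsto W_\lambda$ of \eqref{eq:Wlambda}; for this family it was shown that $C_{FB}(\{W_0,f,0\})=1$, that $C_{FB}(\{W_0,f,1\})=\log_2\!\big(1+2^{-g(\epsilon)}\big)<1$, and that $C_{FB}(\{W_\lambda,f,0\})=C_{FB}(\{W_\lambda,f,1\})$ for every $\lambda\in(0,\tfrac{1}{2}]$ (the channel $\{W_\lambda,f\}$ being strongly connected, hence indecomposable, for such $\lambda$). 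Writing $\psi_{s_0}(\lambda):=C_{FB}(\{W_\lambda,f,s_0\})$, this means $\psi_0$ and $\psi_1$ agree on $(0,\tfrac{1}{2}]$ while $\psi_0(0)\neq\psi_1(0)$; equivalently, the difference $\Delta(\lambda):=\psi_1(\lambda)-\psi_0(\lambda)$ is identically $0$ on $(0,\tfrac{1}{2}]$ and nonzero at $\lambda=0$, hence \emph{discontinuous} at $\lambda=0$. This discontinuity is the obstruction we will exploit.

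First I would reduce to the monotone case. Given hypothetical sequences $\{F_N\}$, $\{G_N\}$ as in the theorem for a fixed $s_0$, pass to the running extremes $\tilde F_N:=\max_{k\le N}F_k$ and $\tilde G_N:=\min_{k\le N}G_k$. Since the pointwise maximum (resp.\ minimum) of finitely many computable continuous functions is again computable continuous, uniformly in $N$, these are again computable sequences of computable continuous functions; they still satisfy $\tilde F_N\le C_{FB}(\{\cdot,\cdot,s_0\})\le\tilde G_N$ and, being squeezed between $F_N$ (resp.\ $G_N$) and $C_{FB}$, still converge pointwise to $C_{FB}(\{\cdot,\cdot,s_0\})$; moreover now $\tilde F_N\le\tilde F_{N+1}\le\tilde G_{N+1}\le\tilde G_N$. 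So without loss of generality the sandwich is monotone.

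Next I would restrict everything to the computable path $\lambda\mapsto(W_\lambda,f)$, $\lambda\in[0,\tfrac{1}{2}]\cap\R_c$. The map $\lambda\mapsto W_\lambda$ is affine with rational coefficients, hence computable continuous, so $\hat F_N(\lambda):=\tilde F_N(W_\lambda,f)$ and $\hat G_N(\lambda):=\tilde G_N(W_\lambda,f)$ are computable sequences of computable continuous functions on $[0,\tfrac{1}{2}]$ obeying the monotone sandwich $\hat F_N\le\hat F_{N+1}\le\hat G_{N+1}\le\hat G_N$ with common pointwise limit $\psi_{s_0}$. By Corollary~\ref{cor:corollary1} (after the affine rescaling of $[0,\tfrac{1}{2}]$ onto $[0,1]$), $\psi_{s_0}$ is then a computable continuous function; in particular it is continuous (at the very least $\psi_{s_0}(\lambda)\to\psi_{s_0}(0)$ as $\lambda\downarrow0$ along computable $\lambda$, which is all that is needed below). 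Applying this to both $s_0=0$ and $s_0=1$ makes $\psi_0$, $\psi_1$, and hence $\Delta=\psi_1-\psi_0$, continuous on $[0,\tfrac{1}{2}]$ --- contradicting the discontinuity of $\Delta$ at $\lambda=0$. Hence, for at least one $s_0\in\{0,1\}$, no such sequences $\{F_N\}$, $\{G_N\}$ can exist, which proves the statement for $|\sX|=|\sY|=|\sS|=2$.

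Finally, the extension to arbitrary $|\sX|\ge2$, $|\sY|\ge2$, $|\sS|\ge2$ is the same two-step padding of $W$ and $f$ performed at the end of the proof above: it keeps the channels unifilar and, for $\lambda>0$, strongly connected, and leaves the values $C_{FB}(\{W_0,f,0\})=1\neq\log_2\!\big(1+2^{-g(\epsilon)}\big)=C_{FB}(\{W_0,f,1\})$ unchanged, so the argument applies verbatim with $s_0\in\{0,1\}\subseteq\sS$. The main obstacle, in my view, is conceptual rather than computational: one cannot reach the contradiction by invoking non-computability of $C_{FB}(\{\cdot,\cdot,s_0\})$ for a \emph{single} $s_0$, since the earlier argument only rules out (Banach--Mazur) computability of the \emph{difference} of the two initial-state capacities and leaves open that one of them is perfectly computable; instead the right final contradiction is the purely topological one --- tight computable continuous bounds would, via monotonization and Corollary~\ref{cor:corollary1}, force both $\psi_0$ and $\psi_1$ to be continuous, whereas the construction forces their difference to jump at $\lambda=0$. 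The remaining ingredients (monotonization preserving computable continuity uniformly in $N$, composition with the affine path, and the padding) are routine.
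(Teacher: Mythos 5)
Your proof is correct, but it takes a genuinely different route from the paper's. The paper's proof is a two-line reduction: if such sequences existed, then by Corollary~\ref{cor:corollary1} the map $(W,f)\mapsto C_{FB}(\{W,f,s_0\})$ would be computable continuous, hence Banach-Mazur computable, contradicting the earlier computability theorem --- whose proof in fact shows that the two slices $C_{FB}(\{\cdot,\cdot,0\})$ and $C_{FB}(\{\cdot,\cdot,1\})$ cannot \emph{both} be Banach-Mazur computable, so that for at least one $s_0$ the required contradiction is available. You instead bypass that theorem entirely and rerun the underlying construction in purely topological form: tight computable continuous bounds for both initial states would force $\psi_0$ and $\psi_1$ to be continuous along the path $\lambda\mapsto W_\lambda$, while the construction forces $\psi_1-\psi_0$ to jump at $\lambda=0$ (essentially the argument of Theorem~\ref{thm:discont} plus Corollary~\ref{cor:corollary1}). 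What your route buys is self-containment --- you need only the values of $C_{FB}$ on the family $W_\lambda$, not the recursively enumerable set machinery --- and more rigor on two points the paper's one-liner glosses over: Corollary~\ref{cor:corollary1} assumes the monotone nesting $F_N\leq F_{N+1}\leq G_{N+1}\leq G_N$, which Theorem~\ref{the:fsctheorem5} does not, and your running max/min repairs this; and the corollary is stated for functions of one real variable on $[0,1]$, which your restriction to the affine path handles cleanly. The one claim I would push back on is your remark that the contradiction ``cannot'' be reached by invoking non-computability of a single slice: since the earlier proof rules out simultaneous Banach-Mazur computability of the two slices, at least one fixed-$s_0$ slice is non-computable, and for that (unidentified) $s_0$ the paper's argument is valid --- the theorem only asserts existence of a bad $s_0$. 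This mischaracterization of the alternative does not affect the validity of your own argument.
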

\begin{proof}
	The result follows immediately from Corollary \ref{cor:corollary1}. If such sequences $\{F_N\}_{N\in\N}$ and $\{G_N\}_{N\in\N}$ would exist, then $C_{FB}$ would be a computable continuous function which is a contradiction, since $C_{FB}$ is for a certain $s_0\in\sS$ not Banach-Mazur computable. 
\end{proof}

This result shows that an approximation of $C_{FB}$ by computable continuous functions is not possible. From this, we can immediately conclude the following.

\begin{cor}
	\label{cor:fsccor2}
	For all computable sequences $\{F_N\}_{N\in\N}$ and $\{G_N\}_{N\in\N}$ of computable continuous functions for which there exists an $s_0\in\sS$ such that for $N\in\N$ it holds that
	\begin{equation*}
		F_N(W,f) \leq C_{FB}(\{W,f,s_0\})
	\end{equation*}
	for all $W\in\tco{\sW_c}$ and $f\in\sS^{\sS\times\sX\times\sY}$, and for $N\in\N$ it holds that
	\begin{equation*}
		C_{FB}(\{W,f,s_0\}) \leq G_N(W,f)
	\end{equation*}
	for all $W\in\tco{\sW_c}$ and $f\in\sS^{\sS\times\sX\times\sY}$, there must exist a $(W_*,f_*)\in\tco{\sW_c}\times\sS^{\sS\times\sX\times\sY}$ such that
	\begin{equation}
	\begin{split}
		0 < \max\Big\{&\limsup_{N\rightarrow\infty}\big|C_{FB}(\{W_*,f_*,s_0\})-F_N(W_*,f_*)\big|,\\
			&\quad \limsup_{N\rightarrow\infty}\big|C_{FB}(\{W_*,f_*,s_0\})-G_N(W_*,f_*)\big|\Big\}.
		\label{eq:fsccor}
	\end{split}
	\end{equation}
\end{cor}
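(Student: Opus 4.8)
The plan is to obtain Corollary \ref{cor:fsccor2} as the contrapositive of Theorem \ref{the:fsctheorem5}. Fix the state $s_0\in\sS$ furnished by Theorem \ref{the:fsctheorem5} (equivalently, the state at which, by the main theorem and Corollary \ref{cor:notturingcomp}, $C_{FB}(\{\cdot,\cdot,s_0\})$ fails to be Banach--Mazur computable), and suppose, contrary to the claim, that there exist computable sequences $\{F_N\}_{N\in\N}$ and $\{G_N\}_{N\in\N}$ of computable continuous functions on $\sW_c\times\sS^{\sS\times\sX\times\sY}$ with $F_N(W,f)\leq C_{FB}(\{W,f,s_0\})\leq G_N(W,f)$ for all $N\in\N$ and all $(W,f)$, but for which no $(W_*,f_*)$ satisfying \eqref{eq:fsccor} exists. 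Negating \eqref{eq:fsccor} at every point means that for all $(W,f)$ we have $\max\big\{\limsup_{N\to\infty}|C_{FB}(\{W,f,s_0\})-F_N(W,f)|,\ \limsup_{N\to\infty}|C_{FB}(\{W,f,s_0\})-G_N(W,f)|\big\}\leq 0$; since each entry of the maximum is a limit superior of nonnegative numbers, both limit superiors are in fact equal to $0$.

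Next I would upgrade these vanishing limit superiors to genuine limits: for each fixed $(W,f)$ the nonnegative sequences $|C_{FB}(\{W,f,s_0\})-F_N(W,f)|$ and $|C_{FB}(\{W,f,s_0\})-G_N(W,f)|$ have limit superior $0$, hence converge to $0$, i.e.\ $F_N(W,f)\to C_{FB}(\{W,f,s_0\})$ and $G_N(W,f)\to C_{FB}(\{W,f,s_0\})$ pointwise on $\sW_c\times\sS^{\sS\times\sX\times\sY}$. Combined with the assumed one-sided bounds, the pair $\{F_N\}$, $\{G_N\}$ now satisfies conditions (1)--(3) of Theorem \ref{the:fsctheorem5} for this very $s_0$: they are computable sequences of computable continuous functions with the prescribed domain and codomain, $F_N$ lies below $C_{FB}(\{\cdot,\cdot,s_0\})$ and converges to it, and $G_N$ lies above $C_{FB}(\{\cdot,\cdot,s_0\})$ and converges to it. But Theorem \ref{the:fsctheorem5} asserts that no such pair exists for this $s_0$, a contradiction; hence some $(W_*,f_*)$ with \eqref{eq:fsccor} must exist, which is the assertion of the corollary.

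I do not anticipate a substantive obstacle: all the genuinely hard content (the unifilar FSC construction encoding a recursively enumerable non-recursive set, and the reduction to the halting problem) already resides in the main theorem and, through Corollary \ref{cor:corollary1}, in Theorem \ref{the:fsctheorem5}; Corollary \ref{cor:fsccor2} is merely the ``for all approximation schemes'' reformulation. The only points needing care are bookkeeping ones: first, that a maximum of two nonnegative quantities being nonpositive forces \emph{each} of them to vanish; second, that one must pass from $\limsup_N|C_{FB}-F_N|=0$ and $\limsup_N|C_{FB}-G_N|=0$ to honest pointwise convergence of $\{F_N\}$ and of $\{G_N\}$ \emph{separately} to $C_{FB}(\{\cdot,\cdot,s_0\})$, which is immediate from nonnegativity, so that the hypotheses of Theorem \ref{the:fsctheorem5} are matched exactly; and third, the choice of $s_0$ — since Theorem \ref{the:fsctheorem5} only rules out tight computable continuous sandwiches at its distinguished state, the corollary is to be read with that state, and with that reading the argument above is complete.
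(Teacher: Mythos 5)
Your proof is correct and takes essentially the same route as the paper: the corollary is obtained as the contrapositive of Theorem~\ref{the:fsctheorem5}, by observing that if no $(W_*,f_*)$ satisfied \eqref{eq:fsccor} then both limit superiors would vanish everywhere, forcing pointwise convergence of $\{F_N\}$ and $\{G_N\}$ to $C_{FB}(\{\cdot,\cdot,s_0\})$ and thereby producing exactly the tight computable continuous sandwich that Theorem~\ref{the:fsctheorem5} rules out. Your explicit bookkeeping (nonnegativity upgrading $\limsup=0$ to convergence, and pinning $s_0$ to the distinguished state of Theorem~\ref{the:fsctheorem5}) only makes precise what the paper's one-line argument leaves implicit.
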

\begin{proof}
	These statements follow immediately from Theorem \ref{the:fsctheorem5}, since if \eqref{eq:fsccor} would be zero for all $(W,f)\in\tco{\sW_c}\times\sS^{\sS\times\sX\times\sY}$, then this would imply that $C_{FB}$ is a computable function.
\end{proof}

The functions $\{F_N\}$ can be interpreted as lower bounds for achievable rates and the capacity, while $\{G_N\}$ can be interpreted as upper bounds for the achievable rates and the capacity. Contrary to DMCs, Corollary \ref{cor:fsccor2} states that it is impossible to find a Turing machine that takes any $\{W,f\}$ as input and computes tight upper and lower bounds for capacity of FSCs with feedback. There is either no computable achievability or no computable converse (or both are not computable).

One cannot find techniques, such as the ones for DMCs, that can be implemented on digital computer and gives us, up to a certain precision, the range in which the optimal performance lies. Consequently, if one is interested in studying the behavior of a coding procedure for FSCs with feedback, it is impossible to numerically evaluate it by comparing it to tight bounds of its optimal performance.


\section{Feedback Capacity as a Finite Multi-Letter Optimization Problem}
\label{sec:finite_letter}
In this section, we study whether or not it is possible to formulate the capacity of FSCs with feedback as a finite multi-letter optimization problem. To this aim, we first have to study the continuity behavior of the capacity function. We show that the capacity function is discontinuous for certain $s_0\in\sS$, $f\in\sS^{\sS\times\sX\times\sY}$ and computable $W\in\sW_c$. The discontinuity result makes it impossible to describe the capacity of FSCs with feedback as a finite multi-letter optimization problem providing us with a negative answer to Question 3.
\begin{thm}
	\label{thm:discont}
	For all $|\sX|\geq2$, $|\sY|\geq2$, and $|\sS|\geq2$, the capacity function $C_{FB}:\tco{\sP(\sY|\sX\times\sS)}\times\sS^{\sS\times\sX\times\sY}\rightarrow\R$ is discontinuous.
\end{thm}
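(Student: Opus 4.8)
The plan is to reuse the channel family $\{W_\lambda, f\}$ already constructed in the proof of the Banach-Mazur non-computability theorem and to exhibit an explicit point of discontinuity. Recall that for $\lambda = 0$ we computed $C_{FB}(\{W_0, f, 0\}) = 1$ and $C_{FB}(\{W_0, f, 1\}) = \log_2(1 + 2^{-g(\epsilon)}) < 1$, whereas for every $\lambda \in (0, \tfrac{1}{2}]$ the channel $\{W_\lambda, f, s_0\}$ is strongly connected, hence indecomposable, so $C_{FB}(\{W_\lambda, f, 0\}) = C_{FB}(\{W_\lambda, f, 1\})$. Fix the initial state $s_0 = 1$ (the ``bad'' starting state) and consider the one-parameter path $\lambda \mapsto W_\lambda$ inside $\sP(\sY|\sX\times\sS)$. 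As $\lambda \to 0^+$ this path converges (in total variation, and hence in the distance $D$ defined earlier) to $W_0$. The claim is that $C_{FB}(\{W_\lambda, f, 1\})$ does not converge to $C_{FB}(\{W_0, f, 1\})$, which establishes discontinuity.

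The key step is therefore a limiting argument: I would show $\lim_{\lambda \to 0^+} C_{FB}(\{W_\lambda, f, 1\}) = C_{FB}(\{W_0, f, 0\}) = 1$, while $C_{FB}(\{W_0, f, 1\}) = \log_2(1 + 2^{-g(\epsilon)}) < 1$, so the value jumps. Intuitively, for $\lambda > 0$ the channel is strongly connected, so from the starting state $s_0 = 1$ the encoder can, with positive probability, drive the channel to state $0$ (where $W_\lambda(\cdot|\cdot,0)$ is the nearly-noiseless $\lambda$-crossover BSC), and spend essentially all of the asymptotic time there; the cost of the excursions and the residual $\lambda$-noise both vanish as $\lambda \to 0$. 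Hence the feedback capacity approaches that of the noiseless binary channel, namely $1$. At $\lambda = 0$ exactly, however, the transition structure (Fig.~\ref{fig:stf} / Table~\ref{statetranstable}) traps the chain: starting in state $1$ with the noiseless-at-$0$ but Z-shaped-at-$1$ behavior, the only realizable input-output pairs keep the state at $1$ forever, so the capacity collapses to the Z-channel value. I would make the lower bound $\liminf_{\lambda\to 0^+} C_{FB}(\{W_\lambda,f,1\}) \geq 1$ rigorous by exhibiting an explicit feedback input distribution — e.g. one that first pushes toward state $0$ and then transmits i.i.d. uniform bits — and invoking Theorem~\ref{thm:ifsccapacity} together with continuity in $\lambda$ of the finite-$N$ directed-information terms; the matching upper bound $\leq 1$ is trivial since $|\sX| = 2$.

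Once the jump at $\lambda = 0$, $s_0 = 1$ is established for $|\sX| = |\sY| = |\sS| = 2$, the extension to $|\sX| \geq 2$, $|\sY| \geq 2$, $|\sS| \geq 2$ is handled exactly as in Steps I and II of the previous proof: the added input/output symbols carry zero probability and the added states are made transient with $f$ returning to the original two-state gadget, so none of them affect the capacity values and the discontinuity persists. Finally I would note that, strictly, discontinuity already follows abstractly from the non-Banach-Mazur-computability theorem — a function that is sequentially continuous on a computable domain and maps computable points to computable points with effective modulus would be computable continuous — but giving the explicit witness above is cleaner and is what Question~3 actually needs, since a finite-letter entropic maximization (being a continuous function of $W$ on the compact simplex) could never reproduce a discontinuous $C_{FB}$.

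The main obstacle I anticipate is the limiting claim $\lim_{\lambda \to 0^+} C_{FB}(\{W_\lambda, f, 1\}) = 1$: one must argue that the multi-letter limit in Theorem~\ref{thm:ifsccapacity} behaves continuously as $\lambda \to 0^+$ even though the limit is *not* uniform down to $\lambda = 0$ (that non-uniformity is precisely the discontinuity). The safe route is to avoid swapping limits and instead produce, for each small $\lambda$, an explicit achievable strategy with rate $\geq 1 - \delta(\lambda)$ where $\delta(\lambda) \to 0$, bounding separately the initial hitting time to reach state $0$ and the per-symbol loss $H_2(\lambda)$ incurred while in state $0$.
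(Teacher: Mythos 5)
Your overall strategy (perturb $W_0$ along the family $W_\lambda$, use strong connectivity for $\lambda>0$, and compare with the values at $\lambda=0$) is the right one, but the central quantitative claim on which your argument rests is false. You assert $\lim_{\lambda\to 0^+}C_{FB}(\{W_\lambda,f,1\})=1$ on the grounds that the encoder can drive the chain to state $0$ and ``spend essentially all of the asymptotic time there.'' It cannot: in state $0$ the channel is a BSC with crossover $\lambda$, and by Table~\ref{statetranstable} \emph{any} crossover event (probability exactly $\lambda$ per use, for either input) throws the state to $1$, while returning from state $1$ requires the pair $(x,y)=(1,0)$, which occurs with probability at most $W_\lambda(0|1,1)=\lambda$ per use. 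The expected sojourn times in the two states are therefore both of order $1/\lambda$, so asymptotically at least half the time is spent in state $1$, where at most the Z-channel capacity $\log_2(1+2^{-g(\epsilon)})<1$ bits per use can be conveyed. Hence $\limsup_{\lambda\to0^+}C_{FB}(\{W_\lambda,f,1\})\le \tfrac12\bigl(1+\log_2(1+2^{-g(\epsilon)})\bigr)<1$, and your claimed jump to $1$ does not occur. One could still try to salvage your route by proving the weaker statement $\liminf_{\lambda\to0^+}C_{FB}(\{W_\lambda,f,1\})>\log_2(1+2^{-g(\epsilon)})$ via a combined strategy (use the optimal Z-channel input in state $1$, exploit state $0$ during excursions), but that requires a genuine renewal-type achievability argument that your sketch does not supply.

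The paper sidesteps the need to evaluate any limit. It takes $W_k=W_{1/k}$, notes that for every $k$ the FSC is strongly connected and hence $C_{FB}(\{W_k,f,0\})=C_{FB}(\{W_k,f,1\})$, and then assumes \emph{both} $C_{FB}(\{\cdot,\cdot,0\})$ and $C_{FB}(\{\cdot,\cdot,1\})$ are continuous at $W$. The single common sequence $C_{FB}(\{W_k,f,s_0\})$ would then have to converge simultaneously to $C_{FB}(\{W,f,0\})=1$ and to $C_{FB}(\{W,f,1\})=\log_2(1+2^{-g(\epsilon)})<1$, a contradiction; so at least one of the two is discontinuous, which suffices for Theorem~\ref{thm:discont} without ever computing $\lim_k C_{FB}(\{W_k,f,s_0\})$ or identifying which initial state witnesses the jump. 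Separately, your closing remark that discontinuity ``already follows abstractly'' from the failure of Banach--Mazur computability has the implication backwards: Mazur's theorem gives that Banach--Mazur computable functions are continuous, so discontinuity is the \emph{stronger} fact (and is what one would use to deduce non-computability), not a consequence of it.
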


\begin{proof}
	We consider the channels $W(y_n|x_n,0)$ and $W(y_n|x_n,1)$ as in \eqref{eq:p} and the state transition function $f$ as described in Table \ref{statetranstable}.
	
	Next, we consider $\{W_k,f,s_0\}$ for $k\geq1$ with 
 \begin{align}\label{eq:Wk}
		W_k(y_n|x_n,0) &= \begin{pmatrix}
			1-\frac{1}{k} & \frac{1}{k} \\
			\frac{1}{k} & 1-\frac{1}{k}
		\end{pmatrix}\\
		W_k(y_n|x_n,1) &= \begin{pmatrix}
			1-\epsilon & \epsilon \\
			\frac{1}{k} & 1-\frac{1}{k}
		\end{pmatrix}.
	\end{align}
	We observe that the FSC $\{W_k,f,s_0\}$, $s_0\in\sS$, $k\geq1$, as defined above is unifilar and strongly connected, and therefore indecomposable.  Note that for every $x_n\in\sX,y_n\in\sY$ and $s_0\in\sS$ we have $W_k(y_n|x_n,s_0)\in\sW_c$, which implies that the channels are computable.

For FSCs as defined in \eqref{eq:p}-\eqref{eq:Wk}, we have for any $s_0\in\sS$, $D(\{W,f,s_0\},\{W_k,f,s_0\}) = \frac{2}{k}$.
	Next, let us assume that $C_{FB}(\{W,f,s_0\})$, $s_0\in\{0,1\}$, is a continuous function on $\tco{\sP(\sY|\sX\times\sS)}\times\sS^{\sS\times\sX\times\sY}$. Then  we must have $\lim_{k\rightarrow\infty}C_{FB}(\{W_k,f,0\})=C_{FB}(\{W,f,0\})$ and $\lim_{k\rightarrow\infty}C_{FB}(\{W_k,f,1\})=C_{FB}(\{W,f,1\})$.
	Since for all $k\in\N$ the FSC $\{W_k,f,s_0\}$, $s_0\in\sS$, is indecomposable, we then  have $C_{FB}(\{W_k,f,0\})=C_{FB}(\{W_k,f,1\})$ and consequently obtain
	\begin{align*}
		1 &= C_{FB}(\{W,f,0\}) = \lim_{k\rightarrow\infty}C(\{W_k,f,0\})\\
		& = \lim_{k\rightarrow\infty}C(\{W_k,f,1\})= C_{FB}(\{W,f,1\})\\
		&=\log_2\Big(1+2^{-g(\epsilon)}\Big)<1
	\end{align*}
	with $g(\epsilon)=\frac{H_2(\epsilon)}{1-\epsilon}$ and $\epsilon\in(0,\frac{1}{2})\cap\Q$. This is a contradiction. Hence, at least one of the functions $C(\{W,f,0\})$ or $C(\{W,f,1\})$ must be discontinuous proving the desired result.
\end{proof}

\begin{thm}
	\label{thm:singleletter_cont}
	Let $|\sX|\geq2$, $|\sY|\geq2$, and $|\sS|\geq2$ be arbitrary. Then there is no natural number $n_0\in\N$ such that the capacity $C_{FB}(\{W,f,s_0\})$ can be expressed as
	\begin{equation}
		\label{eq:singleletter_cont}
		C_{FB}(\{W,f,s_0\}) = \max_{u\in\sU}F(u,W,f,s_0)
	\end{equation}
	with $\sU\subset\R^{n_0}$ a compact set and $F: \sU\times\tco{\sP(\sY|\sX\times\sS)}\times\sS^{\sS\times\sX\times\sY}\times\sS\rightarrow\R$ a continuous function.
\end{thm}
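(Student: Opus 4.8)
The plan is to derive a contradiction from Theorem~\ref{thm:discont} by invoking the classical fact that the value function of a maximization of a jointly continuous function over a \emph{fixed} compact set depends continuously on the remaining parameters (the maximum theorem). Concretely, I would assume for contradiction that there exist $n_0\in\N$, a compact set $\sU\subset\R^{n_0}$, and a continuous function $F\colon\sU\times\sP(\sY|\sX\times\sS)\times\sS^{\sS\times\sX\times\sY}\times\sS\to\R$ such that \eqref{eq:singleletter_cont} holds for all $W$, $f$, $s_0$.

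Since $\sS^{\sS\times\sX\times\sY}$ and $\sS$ carry the discrete topology, fixing a state transition function $f$ and an initial state $s_0\in\sS$ turns $F(\cdot,\cdot,f,s_0)$ into a continuous function of $(u,W)\in\sU\times\sP(\sY|\sX\times\sS)$. I would then show that
\[
W\ \longmapsto\ g_{f,s_0}(W)\coloneqq\max_{u\in\sU}F(u,W,f,s_0)
\]
is continuous on $\sP(\sY|\sX\times\sS)$. This is the standard two-sided semicontinuity argument: for lower semicontinuity, if $W_k\to W$ and $u^\star$ attains the maximum at $W$, then $g_{f,s_0}(W_k)\ge F(u^\star,W_k,f,s_0)\to F(u^\star,W,f,s_0)=g_{f,s_0}(W)$; for upper semicontinuity, pick $u_k$ attaining the maximum at $W_k$, extract by compactness of $\sU$ a subsequence $u_{k_j}\to u^\star\in\sU$, and use joint continuity of $F$ to get $\limsup_k g_{f,s_0}(W_k)=\lim_j F(u_{k_j},W_{k_j},f,s_0)=F(u^\star,W,f,s_0)\le g_{f,s_0}(W)$. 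Hence, under the assumed representation, $C_{FB}(\{\cdot,f,s_0\})=g_{f,s_0}(\cdot)$ would be continuous for every $f$ and every $s_0$.

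This contradicts Theorem~\ref{thm:discont}: in its proof the channels $\{W_k,f,s_0\}$ of \eqref{eq:Wk} satisfy $D(\{W,f,s_0\},\{W_k,f,s_0\})=\tfrac{2}{k}\to0$, yet $C_{FB}(\{W_k,f,s_0\})$ fails to converge to $C_{FB}(\{W,f,s_0\})$ for at least one $s_0\in\{0,1\}$ (indecomposability forces $C_{FB}(\{W_k,f,0\})=C_{FB}(\{W_k,f,1\})$, whose common limit equals $\log_2(1+2^{-g(\epsilon)})<1$, whereas $C_{FB}(\{W,f,0\})=1$). Therefore no finite-letter representation of the form \eqref{eq:singleletter_cont} can exist.

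The only delicate point is the continuity of the parametric maximum, and even there the difficulty is purely bookkeeping: one must ensure that the topology on $\sP(\sY|\sX\times\sS)$ with respect to which $F$ is assumed continuous is the same as the one (the total variation metric $D$) with respect to which Theorem~\ref{thm:discont} exhibits the discontinuity. Since all norms on the finite-dimensional simplex $\sP(\sY|\sX\times\sS)$ induce the same topology, this is harmless, and the remainder of the argument is routine.
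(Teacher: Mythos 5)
Your proposal is correct and follows essentially the same route as the paper: the paper likewise observes that a representation of the form \eqref{eq:singleletter_cont} with $\sU$ compact and $F$ continuous would force $C_{FB}$ to be continuous (citing the analogous argument in an earlier identification-capacity paper), which contradicts Theorem~\ref{thm:discont}. You merely spell out the standard two-sided semicontinuity argument for the parametric maximum that the paper leaves implicit.
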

\begin{proof}
We use the same line of arguments as for \cite[Theorem ~1]{boche2019identification} and \cite{boche2020shannon}. The crucial observation is the following: To be able to express the capacity $C_{FB}(\{W,f,s_0\})$ as in \eqref{eq:singleletter_cont}, the capacity necessarily needs to be a continuous function. This cannot be the case shown by Theorem \ref{thm:discont}.
\end{proof}

Theorem \ref{thm:singleletter_cont} implies that the feedback capacity cannot be expressed by a finite multi-letter formula. This implies that there is no closed form solution possible in general for the capacity of FSCs with feedback.

\section{Conclusion}
\label{sec:conclusion}


In this paper we studied the capacity of FSCs with feedback from an algorithmic point of view. We showed that the feedback capacity function is not Banach-Mazur computable, \tco{which is} the weakest form of computability. Hence, the capacity of FSCs with feedback is \tco{also} not Borel-Turing computable. There are FSCs, for which the feedback capacity was computed, as shown in \cite{permuter2008capacity,elishco2014capacity,sabag2015feedback,sabag2015feedback2}. However, there is no general algorithm that takes $\{W,f,s_0\}$ as input and computes the feedback capacity $C_{FB}(\{W,f,s_0\})$. If the capacity of FSCs with feedback had been computable, then this could yield a solution for the halting problem, \tcr{which has been proven to be unsolvable.} 

Since the capacity of FSCs with feedback is not compuatble, one could aim to use upper bounds to compare the behavior of the coding procedures. For a meaningful evaluation, tight upper and lower bounds are desired. Unfortunately, we have further shown that we cannot find tight upper and lower bounds that are simultaneously computable. Meaning that either achievability or converse are non-computable. Hence, when developing coding procedures for the FSC with feedback, it makes hard to evaluate how is good the performance of that code.

The capacity of FSCs with feedback is a multi-letter formula, which means that it is a sequences of optimization problems. This makes it especially difficult to compute. Finding a finite letter formulation could facilitate computing the capacity. However, we show that the capacity of FSCs with feedback cannot be expressed as a finite multi-letter optimization problem. Hence, non of the approaches studied in this paper to approximate the capacity allow us to compute the capacity of FSCs with feedback.

However, it would be interesting to study if restricting the FSC set has an influence on the computability behavior of the feedback capacity.
 
\balance
\bibliographystyle{IEEEtran}
\bibliography{IEEEabrv,confs-jrnls,references_ISIT2022}

\end{document}